\tikzstyle{place}=[circle,draw=black,fill=gray!15,thick,inner sep=0pt,minimum size=6mm]
\tikzstyle{place}=[circle,draw=black,fill=gray!15,thick,inner sep=0pt,minimum size=5mm]
\tikzstyle{place1}=[circle,draw=black,fill=black!95,thick,inner sep=0pt,minimum size=1mm]
\tikzstyle{place3}=[circle,draw=black,fill=black,thick,inner sep=.3pt]
\newtheorem{theorem}{Theorem} 
\newtheorem{lemma}{Lemma}
\newcommand{\rr}{\ensuremath{\mathbb{R}}}
\newcommand{\val}{\ensuremath{\text{Val}}}
\newcommand{\arval}{\ensuremath{\text{AVal}}}
\newcommand{\ca}{\ensuremath{\text{Cap}}}
\newcommand{\opt}{\textrm{Z}}
\newcommand{\prob}[3]%
  { \vspace{.25cm}
    \noindent\fbox{\begin{minipage}{.98\textwidth}
      \textsc{#1}

      \vspace{.25cm}
      \begin{center}
        \begin{tabular}{lp{.8\textwidth}}
          Input:  & #2 \\
          Task: & #3
        \end{tabular}
      \end{center}
    \end{minipage}}

    \vspace{.25cm}}%
\newcommand{\probbox}[3]%
  { \vspace{.5cm}
    \noindent{\begin{minipage}{.98\textwidth}
      \textsc{\hspace{.5cm}#1}

      \vspace{.25cm}
      \begin{center}
        \begin{tabular}{lp{.8\textwidth}}
          Given:  & #2 \\
          Question: & #3
        \end{tabular}
      \end{center}
    \end{minipage}}

    \vspace{.25cm}}%
    \noindent\fbox{\begin{minipage}{.98\textwidth}
      \textsc{#1}

      \vspace{.25cm}
      \begin{center}
        \begin{tabular}{lp{.8\textwidth}}
          Input:  & #2 \\
          Output: & #3
        \end{tabular}
      \end{center}
   \end{minipage}}
\journal{...}
\begin{document}

\begin{frontmatter}
\date{}



\title{On the power of randomization in network interdiction}


\author{Dimitris~Bertsimas, Ebrahim Nasrabadi, James~B.~Orlin}
\address{Sloan School of Management and Operations Research Center, Bldg. E40-147, Massachusetts Institute of Technology, Cambridge, Massachusetts 02139\\
E-mail: \{dbertsim,nasrabad,jorlin\}$@$mit.edu}

\begin{abstract}
Network interdiction can be viewed as a game between two players, an \emph{interdictor} and a \emph{flow player}. The flow player wishes to send as much material as possible through a network, while the interdictor attempts to minimize the amount of transported material by removing a certain number of arcs, say $\Gamma$ arcs. We introduce the \emph{randomized network interdiction} problem that allows the interdictor to use randomness to select arcs to be removed. We model the problem in two different ways: \emph{arc-based} and \emph{path-based} formulations, depending on whether flows are defined on arcs or paths, respectively. We present insights into the modeling power, complexity, and approximability of both formulations.
 In particular, we prove that $\opt_{\text{NI}}/\opt_{\text{RNI}}\leq \Gamma+1$, $\opt_{\text{NI}}/\opt_{\text{RNI}}^{\text{Path}}\leq \Gamma+1$,  $\opt_{\text{RNI}}/\opt_{\text{RNI}}^{\text{Path}}\leq \Gamma$, where $\opt_{\text{NI}}$, $\opt_{\text{RNI}}$, and $\opt_{\text{RNI}}^{\text{Path}}$  are the optimal values of the network interdiction problem and its randomized versions in arc-based and path-based formulations, respectively. We also show that these bounds are tight. We show that it is NP-hard to compute the values $\opt_{\text{RNI}}$ and $\opt_{\text{RNI}}^{\text{Path}}$ for a general $\Gamma$, but they are computable in polynomial time when $\Gamma=1$. Further, we provide a $(\Gamma+1)$-approximation for $\opt_{\text{NI}}$, a $\Gamma$-approximation for $\opt_{\text{RNI}}$, and a $\big(1+\lfloor \Gamma/2\rfloor \cdot \lceil \Gamma/2\rceil/(\Gamma+1)\big)$-approximation for $\opt_{\text{RNI}}^{\text{Path}}$.
 \end{abstract}

\begin{keyword}
network flows\sep interdiction\sep game theory\sep  approximation

\end{keyword}

\end{frontmatter}


\section{Introduction}
Network flows have applications in a wide variety of contexts
(see, e.g., \cite{AhujaMagOrl93}). In some applications, it is useful to consider the perspective of someone who wants to restrict flows in a network. For example, law enforcement wants to inhibit the flow of illegal drugs. Water management experts want to control flows to avoid floods. Health agencies need to protect against contagion.  Here, it is important to consider the problem of limiting flows in the network from the perspective of an \emph{interdictor}, who is capable of limiting capacity in arcs or eliminating arcs.  Such problems have been applied in many application areas such as military planning \cite{Whiteman1999}, controlling infections in a hospital \cite{Assimakopoulos87}, controlling floods \cite{RatliffSiciliaLubore75}, protecting critical infrastructures \cite{MurrayMatisziwEtAl07,SalmeronWoodBaldick04},  and drug interdiction~\cite{Steinrauf91}.

Motivated by the above mentioned applications, \emph{network interdiction problems} have been well studied in the literature (see, e.g., \cite{BayrakBailey08,CormicanMortonWood98,IsraeliWood02,JanjarassukLinderoth08,LimSmith07,SanseverinoMarquez10,RoysetWood07,Wood93}).
In this paper, we focus on the basic model of network interdiction: a \emph{flow player} attempts to maximize the amount of materiel transported through a capacitated network, while an \emph{interdictor} tries to limit the flow player's achievable value by interdicting a certain number, say $\Gamma$, of arcs. This problem is also known as the $\Gamma$-most vital arcs problem~ (see, e.g., \cite{RatliffSiciliaLubore75}). Wollmer~\cite{Wollmer64} presents a polynomial time algorithm for solving this problem on planar graphs. On general networks, Wood~\cite{Wood93} shows that the problem is strongly NP-complete.

Network interdiction can be viewed as a game between the interdictor and the flow player. This problem assumes the interdictor moves first and then the flow player determines a maximum flow in the remaining network. 
A closely related problem arises when a flow must be routed before arcs are removed. In this case, the flow player might be interested to find solutions which are robust again any failure of arcs. 
Aneja  \emph{et al.} \cite{AnejaNairChand01} and Du and Chandrasekaran \cite{DuChan07} address this issue in a path-based formulation. They show that the resulting problem is solvable in polynomial-time for the special case of $\Gamma=1$, but becomes NP-hard if $\Gamma=2$. This problem was further expanded to an arc-based formulation by Bertsimas \emph{et al.}~\cite{BertNasrStil12}, who introduce the concepts of robust and adaptive maximum flows. They establish structural and computational results for both the robust and adaptive maximum flow problems and their corresponding minimum cut problems.

\paragraph{Our contribution} The network interdiction problem addresses a minimax objective against a flow player, which selects adaptively a flow after observing the removed arcs.  This problem requires the interdictor to choose a specific \emph{pure strategy}. We propose a new modeling framework that permits the interdictor to use randomness to choose arcs. More precisely, the interdictor assigns a probability to each pure strategy and selects a pure strategy randomly according to these probabilities.  We refer to the resulting problem as the \emph{randomized} network interdiction problem. This provides a  more realistic model for various applications such as protecting critical infrastructures against terrorism or enemy's attacks. We also consider a further modification that requires the flow player to send flow on paths, rather than the more typical arc-based  model. We present results on the modeling power, complexity, and approximability of both arc-based and path-based formulations.

More specifically, our primary contributions are as follows: 
\begin{enumerate}
\item \emph{\textbf{Modeling:}} We introduce the randomized network interdiction problem in arc-based and path-based formulations. We show that the arc-based (path-based) model is equivalent to the case where the flow player must determine an arc-based (path-based) flow through the network in advance and then the interdictor selects arcs to be removed. This shows that randomization helps the interdictor to perform as well as when she has perfect information of how flow is routed through the network.  


\item \emph{\textbf{Complexity:}} We present complexity results for both arc-based and path-based formulations of the randomized network interdiction problem. In particular, the arc-based model is NP-hard for a general $\Gamma$, but is solvable in polynomial time for a fixed $\Gamma$, while the path-based version is NP-hard for any fixed $\Gamma\geq 2$. For $\Gamma=1$, we show that both formulations become equivalent and are solvable in polynomial time as a linear optimization problem. 

\item \emph{\textbf{Approximability:}} 
We introduce a linear optimization problem and show that its optimal value provides a tight $(\Gamma+1)$--approximation, $\Gamma$--approximation, and $\big(1+\frac{\lfloor \Gamma/2\rfloor \cdot \lceil \Gamma/2\rceil}{\Gamma+1}\big)$--approximation for the optimal values of the network interdiction problem and its randomized versions in arc-based and path-based formulations, respectively. The latter approximation guarantee is $\frac{(\Gamma+2)^2}{4(\Gamma+1)}$ for even $\Gamma$ and is $\frac{\Gamma+3}{4}$ for odd $\Gamma$.  We note that for the $\Gamma=2$ case, it guarantees a $4/3$-approximation for the path-based problem, which is NP-hard. 

\item \emph{\textbf{Power of randomization:}}  
We show that the interdictor can perform significantly better by using randomization. In particular, we show that the ratio of the optimal value of network interdiction to that of the randomized version is bounded by $\Gamma+1$ for both arc-based and path-based models.  We also show that the ratio of the optimal value of the arc-based randomized network interdiction problem to that of the path-based version is bounded by $\Gamma$. We provide examples to show that these bounds are tight. 

\end{enumerate}

\section{Network Interdiction as a Game}
Let  $G=(V,E)$ be a directed graph with \emph{node set} $V$ and
\emph{arc set}~$E$. 
Each arc $e\in E$ has a \emph{capacity}~$u_{e}\in\rr_{+}$ setting an upper bound on the amount of flow on arc $e$. There are two specific nodes, a
{\em source} $s$ and a {\em sink}~$t$. W denote an arc~$e$ from a node~$v$ to a node~$w$ by~$e:=(v,w)$.
We use $\delta^+(v):=\{(v,w)\in E\mid w\in V\}$ and $\delta^-(v):=\{(w,v)\in E\mid w\in V\}$ to denote the sets of arcs leaving
node $v$ and entering node~$v$, respectively. We assume without loss of generality that there are no arcs into $s$ and no arcs out of $t$, that is, $\delta^-(s)=\delta^+(t)=\emptyset$. 

An $s$-$t$-\emph{flow} (or simply a \emph{flow}) $x$ is a function $x:E\rightarrow \rr_{+}$ which assigns a nonnegative value to each arc so that $x_e\leq u_e$ for each $e\in E$, and in addition for each node $v\in V\setminus\{s,t\}$, the following \emph{flow conservation constraint} holds:
\begin{align*}
\sum_{e\in \delta^-(v)}x_e-\sum_{e\in \delta^+(v)}x_e =0.
\end{align*}
We refer to $x_e$ as the \emph{flow} on arc $e$. We denote the set of all $s$-$t$-flows by $\mathcal{X}$. The \emph{value} $\val(x)$ of an $s$-$t$ flow $x$ is the net flow into $t$, that is, $ \val(x):= \sum_{e\in \delta^-(t)}x_e$.
In the \emph{maximum flow} problem (also referred to as the \emph{nominal} problem), we seek an $s$-$t$ flow $x$ with maximum
value $\val(x)$.


We next assume that there is an interdictor, who wants to reduce the capacity of the network. Suppose that the interdictor is able to eliminate $\Gamma$ ($1\leq \Gamma \leq |E|$) arcs in the network. 
The \emph{network interdiction} problem is to find the $\Gamma$ arcs
whose removal from the network minimizes the maximum amount of flow that can be sent to the sink.  To formulate this problem, we let
\begin{align*}
\Omega:=\Big\{\mu=(\mu_e)_{e\in E}\in \{0,1\}^{|E|}\mid  \sum_{e\in E}\mu_e= \Gamma\Big\}
\end{align*}
denote the set of all possible scenarios, that is, the set of all subsets of $\Gamma$ arcs. The binary variable $\mu_e$ indicates whether or not arc $e$ is to be removed, depending on whether $\mu_e=1$ or $\mu_e=0$, respectively.  
Given $\mu\in \Omega$, we denote by $E(\mu):=\{e\in E\mid \mu_e=1\}$ the set of removed arcs and by $F(\mu):=\{e\in E\mid \mu_e=0\}$ the set of available arcs after removing the arcs in the scenario $\mu$. We also denote by
 $G(\mu)=(V,F(\mu))$ a network with arc set $F(\mu)$.  

The network interdiction problem is formulated as
\begin{align}
  \label{pro:NI-org}
   \begin{aligned}
\min_{\mu\in \Omega}~&\max &&\quad \val(x) \\
         &     \text{~s.t.} &&\begin{aligned}[t]
        							x  &\in \mathcal{X},                &&\\
		  												            x_e                &=0&& \forall e\in E(\mu).
       													\end{aligned}
\end{aligned}
\end{align}
%
This problem can be viewed as a two-person zero-sum game between an interdictor and a flow player . The set of (pure) strategies for the interdictor is given by the scenario set $\Omega$.
The set of (pure) strategies for the flow player is given by the feasible set $\mathcal{X}$. With respect to a flow $x$ and scenario $\mu$, we denote by $G(\mu,x)$ a network with arc set $F(\mu)$ and arc capacities $x$. If the interdictor chooses the pure strategy $\mu\in \Omega$ and the flow player chooses the pure strategy $x\in \mathcal{X}$, then the payoff $f(\mu,x)$ of the game is the maximum amount of flow that the flow player can push through the network with respect to the flow $x$ if scenario $\mu$ is selected. Mathematically, the payoff function $f$ is given by
\begin{align}
\label{pro:f(mu,x)}
  \begin{aligned}
f(\mu,x):=&\max	&&\val(y) \\
         &     \text{~s.t.} &&\begin{aligned}[t]
        							y  &\in \mathcal{X},                &&\\
		  					0\leq y_e               &\leq x_e &&  \forall e\in F(\mu)\\
						            y_e                &=0&& \forall e\in E(\mu).
       													\end{aligned}
\end{aligned}
\end{align}

The interdictor aims to minimize the payoff of the game and must choose her strategy first. This provides an alternative and equivalent formulation of network interdiction as follows:
\begin{align}
  \label{pro:NI}
\opt_{\text{NI}}=\min_{\mu\in \Omega}~\max_{x\in \mathcal{X}} &\quad f(\mu,x).
\end{align}
This problem determines the interdictor's best choice, assuming the
flow player is in a position to select a maximum flow after observing the removed arcs. In many applications, the flow player have to make a decision before the interdictor selects her strategy. Here, the flow player might be interested in those solutions that are robust against any possible
scenario.  
This leads to the following problem, referred to as the \emph{adaptive maximum flow} problem:
\begin{align}
  \label{pro:Adp}
\opt_{\text{ADP}}:=\max_{x\in \mathcal{X}}~\min_{\mu\in \Omega} &\quad f(\mu,x).
\end{align}
This problem is introduced by Bertsimas \emph{et al.}~\cite{BertNasrStil12}, who establish structural properties and  complexity results for the problem. In particular, they show that the adaptive maximum flow problem is NP-hard using a reduction from the network interdiction problem.

Note that $\opt_{\text{ADP}}\leq \opt_{\text{NI}}$ and the equality may not be attained in general.  To compare the difference between the network interdiction problem and the adaptive maximum flow problem, we consider a network with three nodes $s$, $v$, and $t$ as shown in Figure \eqref{fig:NI-RNI-RNI(P)}. There are $K$ arcs with unit capacity and one arc with capacity $3/2K$ from $s$ to $t$ and there are $\Gamma+1$ arcs with infinite capacity from $v$ to $t$. Let $\Gamma\geq 2$ and $K$ be enough large. It is easy to see that $\opt_{\text{NI}}=K-1$, while $\opt_{\text{ADP}}=\frac{5K}{2(\Gamma+1)}$. Hence, $\opt_{\text{NI}}/\opt_{\text{ADP}}= \frac{2(\Gamma+1)(K-1)}{5K}$, and the ratio the becomes close to $2(\Gamma+1)/5$ when $K$ gets large. An interesting question is: How large can $\opt_{\text{NI}}/\opt_{\text{ADP}}$ be in general? We will show later that this ratio is bounded by $\Gamma+1$ and this bound is tight.

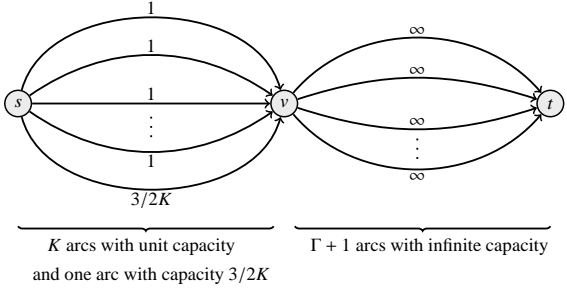
\begin{figure}[t]
\begin{center}
{\scalebox{0.7}
{\begin{tikzpicture}[inner sep=0.4mm]
  \node (1) at (0,0) [place] {$s$};
  \node (2) at (5,0) [place] {$v$};
  \node (3) at (10,0) [place] {$t$};
\begin{scope}[color=black,line width=1pt]
  \draw [->] (1) to [bend left=75] node [above, sloped,midway]{1}  (2) ;
   \draw [->] (1) to [bend left=35]   node [above, sloped,midway]{1}  (2) ;
 \draw [->] (1) -- (2) node [above, sloped,midway]{1};
  \draw [->] (1) to [bend right=35]  node [below, sloped,midway]{1} (2);
    \draw [->] (1) to [bend right=75] node [below, sloped,midway]{$3/2 K$} (2);
  \draw[snake=brace,mirror snake,raise snake=-5pt] (0,-2.5) -- (4.8,-2.5) node [below, sloped,midway]{$K$  arcs with  unit capacity~~~};  
  \pgftext[base,x=2.5cm,y=-3.35cm] {and one arc with capacity $3/2 K$} (3);
  \draw [->] (2) to [bend left=50]  node [above, sloped,midway]{$\infty$} (3);
  \draw [->] (2) to [bend left=18]  node [above, sloped,midway]{$\infty$} (3);
   \draw [->] (2) to [bend right=18]  node [above, sloped,midway]{$\infty$} (3);
  \draw [->] (2) to [bend right=50] node [below, sloped,midway]{$\infty$} (3);
  \draw[snake=brace,mirror snake,raise snake=-5pt] (5.2,-2.5) -- (10,-2.5) node [below, sloped,midway]{~~~$\Gamma+1$ arcs with infinite capacity};
\draw[loosely dotted] (7.5,-1.05) -- (7.5,-0.65);
\draw[loosely dotted] (2.5,-0.25) -- (2.5,-0.65);
\end{scope}
\end{tikzpicture}
}}
\end{center}
\caption{\label{fig:NI-RNI-RNI(P)} Illustration of the difference between the network interdiction problem and maximum adaptive flow problem. The numbers on the arcs indicate the capacities. }
\end{figure}

\section{Randomized network interdiction}
In the network interdiction problem, the interdictor goes first and determines $\Gamma$ arcs to be removed. The flow player observes the set of removed arcs and determines a flow to be sent through the remaining network. In this case, the flow player has complete knowledge of the interdictor's behavior. Our goal is to make the interdictor more powerful and make the flow player weaker. To achieve this, we allow the interdictor to use randomness to decide which strategy to play. More precisely, the interdictor assigns a probability to each pure strategy, and then randomly selects a pure strategy according to the probabilities. The flow player does not see the interdictor's strategy, but observes a probability distribution of how the interdictor decides to select arcs. 

In what follows, we formally define the randomized network interdiction problem. We first focus on the arc-based formulation of flows and then turn our attention to flows on paths, instead of arcs. We notice that the two formulations are equivalent for the network interdiction problem, but they differ for the randomized version, as shown later. Therefore, we treat the two formulations separately.


\subsection{Arc-based formulation}
A \emph{mixed} (or \emph{randomized}) strategy over $\Omega$ is given by a probability distribution $\alpha:\Omega\rightarrow [0,1]$, where $\alpha(\mu)$ is the probability that strategy $\mu$ is selected by the interdictor. 
We denote the set of all mixed strategies over $\Omega$ by $\Delta(\Omega)$.
We extend the payoff function to mixed strategies by defining
\begin{align*}
f(\alpha,x):=\sum_{\mu\in \Omega}\alpha(\mu)f(\mu,x)&& \forall \alpha \in \Delta(\Omega), ~x\in \mathcal{X}.
\end{align*}
The value $f(\alpha,x)$ represents the \emph{expected} payoff  of the game if the interdictor chooses a mixed strategy $\alpha \in \Delta(\Omega)$ and the flow player selects a pure strategy $x \in \mathcal{X}$. 

Given a mixed strategy $\alpha$, the flow player aims to find a flow with maximum expected value. The interdictor wishes to choose a mixed strategy to minimize this value. Therefore, the interdictor deals with the following problem:
\begin{align}
\label{pro:RNI}
\opt_{\text{RNI}}:=\min_{\alpha\in \Delta(\Omega)} ~\max_{x\in \mathcal{X}}~f(\alpha,x).
\end{align}
We refer to this problem as the \emph{randomized network interdiction} problem.

We next show that $\opt_{\text{RNI}}=\opt_{\text{ADP}}$. This shows that randomization permits the interdictor to perform as well as when she has perfect knowledge of the flow player's choice. To prove this, we allow the flow player to select a flow randomly. It is well known from game theory (see, e.g., \cite{Wald45}) that if both players select their strategies randomly, then there exists an equilibrium; that is, no matter which player selects her strategy first, no one has an incentive to change her mixed strategy.  Notice that the set of pure strategies for the flow player is an infinite set. Here, a mixed strategy is given by a finite distribution over $\mathcal{X}$. In fact, a \emph{random} strategy over $\mathcal{X}$ is a probability distribution $\beta: \mathcal{X}\rightarrow [0,1]$ with finite support, that is, it only assigns a non-zero value to a finite number of $s$-$t$-flows. The value $\beta(x)$ gives the probability that the flow $x$ is selected. 
We denote the set of all mixed strategies over $\mathcal{X}$ by $\Delta(\mathcal{X})$.

We extend the payoff function $f$ to mixed strategies for both players by defining
\begin{align*}
f(\alpha,\beta):=\sum_{\mu\in \Omega}\sum_{x\in \mathcal{X}: \beta(x)>0} \alpha(\mu) \beta(x) f(\mu,x)&&\forall \alpha\in \Delta(\Omega),~\beta\in\Delta(\mathcal{X}).
\end{align*}
The value $f(\alpha,\beta)$ gives the expected payoff of the game  if the interdictor chooses a mixed strategy $\alpha$ and the flow player selects a mixed strategy $\beta$. If the interdictor chooses a pure strategy $\mu$ and the flow player chooses a mixed strategy $\beta$, we denote the expected payoff by $f(\mu,\beta):=\sum_{x\in \mathcal{X}: \beta(x)>0} \beta(x) f(\mu,x)$.

\begin{theorem}
\label{thm:RNI=Adp}
$\opt_{\text{RNI}}=\opt_{\text{ADP}}$.
\end{theorem}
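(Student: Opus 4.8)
The plan is to exploit the zero-sum structure together with the special fact that, against a fixed scenario, the flow player's payoff is \emph{concave} in the flow it commits to; this is what makes randomization useful to the interdictor but useless to the flow player. First I would invoke the equilibrium result quoted above. Since $f(\alpha,\beta)$ is bilinear in the two mixed strategies and $\mathcal{X}$ may be taken compact and convex (capping every $x_e$ at the nominal maximum flow value changes no payoff $f(\mu,x)$, so we may restrict to that compact truncation, on which $f$ is continuous), the minimax/equilibrium theorem of Wald gives
\begin{align*}
\min_{\alpha\in\Delta(\Omega)}~\max_{\beta\in\Delta(\mathcal{X})} f(\alpha,\beta)=\max_{\beta\in\Delta(\mathcal{X})}~\min_{\alpha\in\Delta(\Omega)} f(\alpha,\beta).
\end{align*}

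Next I would collapse each mixed best response to a pure one. Because $f(\alpha,\beta)=\sum_{x}\beta(x)f(\alpha,x)$ is linear in $\beta$, its maximum over $\Delta(\mathcal{X})$ is attained at a point mass, so $\max_{\beta}f(\alpha,\beta)=\max_{x\in\mathcal{X}}f(\alpha,x)$ and the left-hand side above equals $\opt_{\text{RNI}}$. Symmetrically, $f(\alpha,\beta)=\sum_{\mu}\alpha(\mu)f(\mu,\beta)$ is linear in $\alpha$, so $\min_{\alpha}f(\alpha,\beta)=\min_{\mu\in\Omega}f(\mu,\beta)$ and the right-hand side equals $\max_{\beta\in\Delta(\mathcal{X})}\min_{\mu\in\Omega}f(\mu,\beta)$. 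It then remains to prove that the flow player gains nothing by randomizing, namely
\begin{align*}
\max_{\beta\in\Delta(\mathcal{X})}~\min_{\mu\in\Omega} f(\mu,\beta)=\max_{x\in\mathcal{X}}~\min_{\mu\in\Omega} f(\mu,x)=\opt_{\text{ADP}}.
\end{align*}

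The inequality $\ge$ is immediate, since a pure flow is a degenerate mixed strategy. For $\le$ I would use the key structural fact that, for each fixed $\mu$, the payoff $f(\mu,x)$ is a \emph{concave} function of the capacity vector $x$: given optimal flows for two capacity vectors, their convex combination is feasible for the convex combination of the capacities and achieves the corresponding value. Hence for any $\beta\in\Delta(\mathcal{X})$ with mean $\bar{x}:=\sum_{x}\beta(x)\,x$, which lies in the convex set $\mathcal{X}$, Jensen's inequality yields $f(\mu,\beta)=\sum_{x}\beta(x)f(\mu,x)\le f(\mu,\bar{x})$ for every $\mu$, and therefore $\min_{\mu}f(\mu,\beta)\le\min_{\mu}f(\mu,\bar{x})\le\opt_{\text{ADP}}$. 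Taking the maximum over $\beta$ gives the missing inequality, and chaining the equalities produces $\opt_{\text{RNI}}=\opt_{\text{ADP}}$.

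I expect the main obstacle to be precisely this derandomization step: it is where the network-flow structure enters, since concavity of the max-flow value in the capacities is exactly what forces a single pure flow to be at least as good as any lottery over flows against a worst-case scenario. A secondary technical point is justifying the equilibrium theorem over the infinite pure-strategy set $\mathcal{X}$ of the flow player, which needs the compactness and continuity reduction noted above; alternatively, one could apply Sion's minimax theorem directly to the concave--linear function $f(\alpha,x)$ on $\Delta(\Omega)\times\mathcal{X}$, bypassing $\Delta(\mathcal{X})$ and the Jensen argument altogether.
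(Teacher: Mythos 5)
Your proposal is correct and takes essentially the same route as the paper: lift both players to mixed strategies, invoke Wald's minimax theorem (justified by the finiteness of $\Omega$), collapse the interdictor's mixture to a pure scenario by linearity, and derandomize the flow player by applying Jensen's inequality to the concave functions $f(\mu,\cdot)$ over the convex set $\mathcal{X}$. Your explicit per-scenario Jensen step and the compactness remark merely spell out details the paper leaves implicit in its appeal to concavity of $\min_{\mu\in\Omega} f(\mu,x)$, so no substantive difference remains.
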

\begin{proof}
The result follows from the following observations:
\begin{align}
\opt_{\text{RNI}}=\min_{\alpha\in \Delta(\Omega)} ~\max_{x\in \mathcal{X}}~f(\alpha,x)
&= \min_{\alpha\in \Delta(\Omega)} ~\max_{\beta\in \Delta(\mathcal{X})}~ f(\alpha,\beta)\label{eq:RNO=Adp1}\\
&=\max_{\beta\in \Delta(\mathcal{X})}~ \min_{\alpha\in \Delta(\Omega)}~ f(\alpha,\beta)\label{eq:RNO=Adp2}\\
&=\max_{\beta\in \Delta(\mathcal{X})}~ \min_{\mu\in \Omega} ~f(\mu,\beta)\label{eq:RNO=Adp3}\\
&=\max_{x\in \mathcal{X}}~ \min_{\mu\in \Omega}~ f(\mu,x)=\opt_{\text{ADP}}\label{eq:RNO=Adp4}.
\end{align}

The second equality in Equation~\eqref{eq:RNO=Adp1} holds since the payoff function $f(\alpha,x)$ is concave in $x$ and the pure strategy set $\mathcal{X}$ is convex.

The equality in Equation~\eqref{eq:RNO=Adp2} follows from the well-known Wald's Minimax Theorem \cite{Wald45}  due to the fact that pure strategy set $\Omega$ is finite. 

Furthermore, the equality in Equation~\eqref{eq:RNO=Adp3} holds since 
\begin{align*}
\min_{\alpha\in \Delta(\Omega)} f(\alpha,\beta)=\min_{\mu\in \Omega} f(\mu,\beta)
\end{align*}
for a fixed $\beta\in \Delta(\mathcal{X})$ due to the fact that $\Omega$ is a finite set.  

It remains to prove the validity of the first equality in Equation~\eqref{eq:RNO=Adp4}. For a fixed $x\in \mathcal{X}$, we define $\arval(x):=\min_{\mu\in \Omega} f(\mu,x)$.   This function is concave in $x$. Hence, the first equality in Equation~\eqref{eq:RNO=Adp4} holds. This completes the proof of the theorem.
\end{proof}

\subsection{Path-based formulation}
So far, we have considered flows in an arc-based formulation. We next focus on an alternative formulation of flows, in which the flow player must specify paths on which to route the material. 
This leads to a different model for the randomized network interdiction problem.

Let $\mathcal{P}$ denote the set of all $s$-$t$-paths (i.e., paths from $s$ to $t$). For $P\in \mathcal{P}$, we write $e\in P$ to indicate that arc $e\in E$ lies on $P$. An $s$-$t$-(path-based) flow is a function $x:\mathcal{P}\rightarrow\rr_+$ that assigns a nonnegative value to each path so that the total flow on each arc does not exceed the capacity of the arc, that is,
\begin{align*}
\sum_{P\in \mathcal{P}:e\in P} x_{P}&\leq u_e&&\forall e\in E.
\end{align*}
The value of $x$ is the sum of the flows on the paths, i.e., $\val(x)=\sum_{P\in \mathcal{P}}x_{P}$. We use $\mathcal{X}_P$ to denote the set of all $s$-$t$-path-based flows.

Notice that the flow on path $P$ cannot reach the sink if some arc in $P$ is removed. In particular, if the interdictor selects the strategy $\mu$ and the flow player chooses a flow $x\in \mathcal{X}_P$, then the payoff of the game is given by 
\begin{align*}
g(\mu,x):=&\sum_{P\in\mathcal{P}}\max\Big\{0,1-\sum_{e\in P} \mu_e \Big\}x_P
\end{align*}
This function differs from the arc-based function $f$, defined in Equation~\eqref{pro:f(mu,x)}, because flows are not permitted in this version to be routed. The value $g(\mu,x)$ gives the amount of flow that can reach the sink if the the arcs in $\mu$ are removed.  We point out that if no arc in a path $P$ is removed, then $\max\big\{0,1-\sum_{e\in P} \mu_e \big\}=1$ and the flow on path $P$ is counted in computing the value $g(\mu,x)$. Otherwise, we will have $\max\big\{0,1-\sum_{e\in P} \mu_e \big\}=0$, and then the flow on path $P$ does not contribute to the value $g(\mu,x)$.

We present an alternative, but equivalent, formulation of the network interdiction problem as follows:
\begin{align}
  \label{pro:NI-path}
\opt_{\text{NI}}^{\text{Path}}:=\min_{\mu\in \Omega}~\max_{x\in \mathcal{X}_P} &\quad g(\mu,x).
\end{align}
We now consider the case where the flow player has to choose a flow up front before the interdictor chooses her strategy. In this situation, the flow player addresses the following problem:
\begin{align}
  \label{pro:Adp-path}
\opt_{\text{ADP}}^{\text{Path}}:=\max_{x\in \mathcal{X}_P} ~\min_{\mu\in \Omega}&\quad g(\mu,x).
\end{align} 
This problem is introduced by Aneja \emph{et al.}~\cite{AnejaNairChand01}, who study the case where only one arc is
permitted to be removed.  They show that the problem is solvable in polynomial
time in this special case. Later, Du and Chandrasekaran~\cite{DuChan07} show
that the problem is NP-hard for if two arcs can be removed. Bertsimas \emph{et al.}~\cite{BertNasrStil12} examine the problem in a general setting and propose approximation approchaes to obtain near optimal solutions. 

We next assume that the interdictor uses randomness to select arcs to be deleted. This leads to the following problem, referred to as the \emph{randomized network interdiction} problem in the \emph{path-based formulation}:
\begin{align}
\label{pro:RNI-path}
\opt_{\text{RNI}}^{\text{Path}}:=\min_{\alpha\in \Delta(\Omega)} ~\max_{x\in \mathcal{X}_P}~\sum_{\mu\in \Omega}\alpha(\mu) g(\mu,x).
\end{align}

\begin{theorem}
\label{thm:RNI=Adp-Path}
$\opt_{\text{RNI}}^{\text{Path}}=\opt_{\text{ADP}}^{\text{Path}}$.
\end{theorem}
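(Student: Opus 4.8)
The plan is to mirror the proof of Theorem~\ref{thm:RNI=Adp} exactly, since the only change from the arc-based to the path-based setting is that the payoff function $f$ is replaced by $g$, while the strategy sets $\Omega$ and $\mathcal{X}_P$ and the randomization mechanism are structurally identical. First I would extend the payoff function to mixed strategies for both players by setting $g(\alpha,\beta):=\sum_{\mu\in \Omega}\sum_{x\in \mathcal{X}_P:\,\beta(x)>0}\alpha(\mu)\beta(x)g(\mu,x)$ for $\alpha\in\Delta(\Omega)$ and $\beta\in\Delta(\mathcal{X}_P)$, where $\Delta(\mathcal{X}_P)$ denotes the set of finitely-supported probability distributions over path-based flows, and $g(\alpha,x):=\sum_{\mu\in\Omega}\alpha(\mu)g(\mu,x)$. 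Then I would establish the chain of equalities
\begin{align*}
\opt_{\text{RNI}}^{\text{Path}}
&=\min_{\alpha\in \Delta(\Omega)}\max_{x\in \mathcal{X}_P}g(\alpha,x)
=\min_{\alpha\in \Delta(\Omega)}\max_{\beta\in \Delta(\mathcal{X}_P)}g(\alpha,\beta)\\
&=\max_{\beta\in \Delta(\mathcal{X}_P)}\min_{\alpha\in \Delta(\Omega)}g(\alpha,\beta)
=\max_{\beta\in \Delta(\mathcal{X}_P)}\min_{\mu\in \Omega}g(\mu,\beta)\\
&=\max_{x\in \mathcal{X}_P}\min_{\mu\in \Omega}g(\mu,x)=\opt_{\text{ADP}}^{\text{Path}},
\end{align*}
justifying each step as in the arc-based case.

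The justifications I would give are as follows. The first new equality holds because $g(\alpha,x)=\sum_{\mu}\alpha(\mu)g(\mu,x)$ is a linear, hence concave, function of $x$ and because $\mathcal{X}_P$ is a convex set (it is defined by linear inequalities $\sum_{P\ni e}x_P\le u_e$ and $x_P\ge 0$); therefore maximizing the expected payoff over distributions $\beta$ on $\mathcal{X}_P$ gains nothing over maximizing over pure flows $x$, since a mixture of flows is dominated by its expectation. The central swap $\min_\alpha\max_\beta = \max_\beta\min_\alpha$ follows from Wald's Minimax Theorem, which applies because $\Omega$ is finite; this is the same appeal used in Equation~\eqref{eq:RNO=Adp2}. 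The next equality uses that for fixed $\beta$ the inner minimum over the simplex $\Delta(\Omega)$ is attained at a vertex, i.e. $\min_{\alpha}g(\alpha,\beta)=\min_{\mu\in\Omega}g(\mu,\beta)$, again because $\Omega$ is finite. The final equality reduces the outer maximum back to pure flows by observing that $\arval^{\text{Path}}(x):=\min_{\mu\in\Omega}g(\mu,x)$ is concave in $x$, being the pointwise minimum of the finitely many linear functions $x\mapsto g(\mu,x)$.

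The only point that warrants genuine care — and what I expect to be the main obstacle — is verifying that $g(\mu,x)$ really is linear (hence concave) in $x$ for each fixed $\mu$, so that both the concavity arguments and the vertex argument go through cleanly. This is true but slightly less transparent than in the arc-based case: for fixed $\mu$ the coefficient $\max\{0,1-\sum_{e\in P}\mu_e\}\in\{0,1\}$ is a constant depending only on $\mu$ and $P$, so $g(\mu,x)=\sum_{P}c_P^{\mu}\,x_P$ with $c_P^{\mu}\in\{0,1\}$ fixed, which is manifestly linear in the vector $x=(x_P)_{P\in\mathcal{P}}$. I would state this observation explicitly, since it is precisely the feature that makes $g$ behave like $f$ for the purposes of the minimax argument, and then note that with this in hand the remainder of the proof is identical in structure to that of Theorem~\ref{thm:RNI=Adp}.
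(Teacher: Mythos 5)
Your proposal is correct and follows exactly the paper's own route: the paper's proof likewise observes that $g(\mu,x)$ is linear in $x$ and $\mathcal{X}_P$ is convex, so the flow player gains nothing from mixing, and then invokes the same minimax chain as in Theorem~\ref{thm:RNI=Adp}. Your only addition is to write out explicitly the chain of equalities and the observation that $g(\mu,x)=\sum_{P}c_P^{\mu}x_P$ with fixed coefficients $c_P^{\mu}\in\{0,1\}$, which the paper leaves implicit.
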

\begin{proof}
The flow player does not benefit by choosing a mixed strategy because the payoff function $g(\mu,x)$ is linear in $x$ and the set of pure strategies $\mathcal{X}_P$ is convex. The proof now follows in a similar way as in the proof of Theorem~\ref{thm:RNI=Adp}.
\end{proof}

Problems~\eqref{pro:NI} and \eqref{pro:NI-path} are equivalent; that is, $\opt_{\text{NI}}=\opt_{\text{NI}}^{\text{Path}}$, since the interdictor first chooses arcs  to be removed and then the flow player solves a maximum flow problem in the remaining network. But this situation becomes different for the randomized version. To see the difference between the arc-based and path-based formulations, we refer to the network in Figure~\ref{fig:NI-RNI-RNI(P)}. In this network, $\opt_{\text{RNI}}^{\text{Path}}=\frac{2K}{\Gamma+1}$, while $\opt_{\text{RNI}}=\frac{5K}{2(\Gamma+1)}$. Thus, $\opt_{\text{RNI}}/\opt_{\text{RNI}}^{\text{Path}}= \frac{5}{4}$ for $\Gamma\geq 2$. We will show that this ratio is bounded by $\Gamma$ and this bound is tight. For $\Gamma=1$, we will show in the next section that the two formulations are equivalent and one can compute an optimal mixed strategy in polynomial-time.

\section{Complexity results}
In this section, we investigate computational complexity of the randomized network interdiction problem. By Theorems~\ref{thm:RNI=Adp} and \ref{thm:RNI=Adp-Path}, we know $\opt_{\text{RNI}}=\opt_{\text{ADP}}$ and $\opt_{\text{RNI}}^{\text{Path}}=\opt_{\text{ADP}}^{\text{Path}}$, respectively. Thus, complexity results for  computing $\opt_{\text{ADP}}$ and $\opt_{\text{ADP}}^{\text{Path}}$ carry over $\opt_{\text{RNI}}$ and $\opt_{\text{RNI}}^{\text{Path}}$, respectively. 

Bertsimas \emph{et al.}~\cite{BertNasrStil12} formulate the adaptive maximum flow problem as a linear optimization problem with exponentially many variables and constraints. When $\Gamma$ is fixed, the linear optimization problem has polynomial many variables and constrains, and thus can be solved in polynomial time. But, in general, they show that the adaptive maximum flow problem is strongly NP-hard by a reduction from the network interdiction problem. Thus, we have the following theorem.

\begin{theorem}
For a fixed $\Gamma$, the value $\opt_{\text{RNI}}$ can be computed in polynomial-time as a linear optimization problem. For a general $\Gamma$, it is strongly NP-hard to compute  $\opt_{\text{RNI}}$.
\end{theorem}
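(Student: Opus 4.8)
The plan is to reduce both parts of the statement to known facts about the adaptive maximum flow problem via the identity $\opt_{\text{RNI}}=\opt_{\text{ADP}}$ proved in Theorem~\ref{thm:RNI=Adp}. Because the two optimal values coincide, every algorithmic and every hardness result for computing $\opt_{\text{ADP}}$ transfers verbatim to $\opt_{\text{RNI}}$, so it suffices to argue each claim for $\opt_{\text{ADP}}$.

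For the positive part, I would exhibit an explicit linear program for $\opt_{\text{ADP}}=\max_{x\in\mathcal{X}}\min_{\mu\in\Omega}f(\mu,x)$ and then bound its size. Introduce a scalar $z$, the committed flow $x\in\mathcal{X}$, and, for each scenario $\mu\in\Omega$, an auxiliary flow vector $y^{\mu}$; maximize $z$ subject to $x\in\mathcal{X}$ together with, for every $\mu$, the requirements that $y^{\mu}$ is an $s$-$t$-flow with $0\le y^{\mu}_e\le x_e$ for $e\in F(\mu)$, $y^{\mu}_e=0$ for $e\in E(\mu)$, and $\val(y^{\mu})\ge z$. Since $f(\mu,x)$ is precisely the maximum value of such a $y^{\mu}$, the program is feasible for a pair $(z,x)$ exactly when $z\le\min_{\mu}f(\mu,x)$, so its optimum equals $\opt_{\text{ADP}}$. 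The decisive observation is that $|\Omega|=\binom{|E|}{\Gamma}=O(|E|^{\Gamma})$, which is polynomial whenever $\Gamma$ is fixed; the program then has $O(|E|^{\Gamma+1})$ variables and a polynomial number of constraints, and is solvable in polynomial time by any polynomial-time linear-programming method.

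For the hardness part, I would invoke the reduction from the network interdiction problem to the adaptive maximum flow problem given by Bertsimas \emph{et al.}~\cite{BertNasrStil12}. Since computing $\opt_{\text{NI}}$ is strongly NP-complete (Wood~\cite{Wood93}) and that reduction keeps the capacities polynomially bounded, computing $\opt_{\text{ADP}}$, and hence $\opt_{\text{RNI}}$, is strongly NP-hard for general $\Gamma$.

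With Theorem~\ref{thm:RNI=Adp} available the argument is mostly bookkeeping, and the only step requiring genuine care is verifying that the auxiliary-flow linear program faithfully encodes the nested $\max$-$\min$-$\max$ defining $\opt_{\text{ADP}}$ and that its size is polynomial for fixed $\Gamma$; I expect this size count to be the crux of the polynomial-time claim.
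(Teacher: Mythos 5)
Your proposal is correct and follows essentially the same route as the paper: both reduce the claim to the adaptive maximum flow problem via Theorem~\ref{thm:RNI=Adp}, use the scenario-expansion linear program (polynomially sized since $|\Omega|=\binom{|E|}{\Gamma}=O(|E|^{\Gamma})$ for fixed $\Gamma$) for the tractability part, and cite the reduction from network interdiction in Bertsimas \emph{et al.}~\cite{BertNasrStil12} together with Wood's strong NP-completeness result for the hardness part. The only difference is that you write out explicitly the linear program that the paper merely cites, and your verification that feasibility of $(z,x)$ with auxiliary flows $y^{\mu}$ encodes $z\leq\min_{\mu}f(\mu,x)$ is exactly the justification underlying the cited formulation.
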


As mentioned already before, Du and Chandrasekaran~\cite{DuChan07} show that computing $\opt_{\text{ADP}}^{\text{Path}}$ is NP-hard even for the case that the interdictor is able to remove only two arcs. This leads to the following hardness result.
\begin{theorem}
The randomized network interdiction problem in the path-based formulation is NP-hard for any fixed $\Gamma\geq 2$.
\end{theorem}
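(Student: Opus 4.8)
The plan is to prove the claim by reduction, transferring the known hardness of $\opt_{\text{ADP}}^{\text{Path}}$ directly onto $\opt_{\text{RNI}}^{\text{Path}}$. The crucial observation, already established in Theorem~\ref{thm:RNI=Adp-Path}, is that $\opt_{\text{RNI}}^{\text{Path}}=\opt_{\text{ADP}}^{\text{Path}}$ for every instance. Thus the two quantities coincide as functions of the input, and any polynomial-time algorithm computing one would yield a polynomial-time algorithm computing the other. Since Du and Chandrasekaran~\cite{DuChan07} show that computing $\opt_{\text{ADP}}^{\text{Path}}$ is NP-hard when $\Gamma=2$, we immediately inherit NP-hardness of computing $\opt_{\text{RNI}}^{\text{Path}}$ for $\Gamma=2$.

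First I would state the reduction precisely: given an instance of the path-based adaptive maximum flow problem for a fixed $\Gamma\geq 2$, I would feed exactly the same network $G=(V,E)$, capacities $u$, source $s$, sink $t$, and budget $\Gamma$ into the path-based randomized interdiction problem. This is the identity reduction and is trivially computable in polynomial time. By Theorem~\ref{thm:RNI=Adp-Path}, the optimal value returned, $\opt_{\text{RNI}}^{\text{Path}}$, equals $\opt_{\text{ADP}}^{\text{Path}}$, so solving the former solves the latter.

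To extend from $\Gamma=2$ to every fixed $\Gamma\geq 2$, I would pad the instance: starting from a hard instance with budget $2$, I would augment the network with $\Gamma-2$ additional arcs whose removal cannot affect the optimal value—for instance, $\Gamma-2$ parallel ``dummy'' arcs of capacity zero (or arcs incident to an isolated auxiliary node) that carry no flow and lie on no $s$-$t$-path of positive flow. Any optimal interdictor strategy will waste its extra budget on these dummy arcs, so the optimal value of the padded instance under budget $\Gamma$ equals that of the original instance under budget $2$. This preserves hardness while matching any prescribed fixed $\Gamma\geq 2$.

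The main obstacle, and the one point requiring care, is the padding argument: I must verify that the augmented arcs genuinely do not alter $\opt_{\text{ADP}}^{\text{Path}}$, meaning that spending part of the budget on them is never advantageous to the interdictor and that their presence does not create new high-value paths for the flow player. Choosing dummy arcs that lie on no $s$-$t$-path (e.g. loops at an isolated node, or arcs into $s$ / out of $t$, which the model already forbids as flow-carrying) makes this transparent, since such arcs contribute nothing to any $g(\mu,x)$ and removing them is equivalent to removing nothing. With that verification in place, the reduction is complete and the NP-hardness for all fixed $\Gamma\geq 2$ follows.
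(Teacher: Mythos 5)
Your first step is exactly the paper's argument: the paper proves this theorem by nothing more than observing that Theorem~\ref{thm:RNI=Adp-Path} gives $\opt_{\text{RNI}}^{\text{Path}}=\opt_{\text{ADP}}^{\text{Path}}$ on every instance, so the NP-hardness of computing $\opt_{\text{ADP}}^{\text{Path}}$ due to Du and Chandrasekaran~\cite{DuChan07} transfers via the identity reduction. That part of your proposal is correct and complete for $\Gamma=2$ (the paper, for what it is worth, does not spell out the extension to $\Gamma>2$ at all, so you are attempting to fill a real gap in the exposition).

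However, your padding gadget for general fixed $\Gamma\geq 2$ is broken, and the flaw is the direction of the interdictor's incentives. You add $\Gamma-2$ dummy arcs that lie on no $s$-$t$-path (or have capacity zero) and claim that ``any optimal interdictor strategy will waste its extra budget on these dummy arcs.'' This is backwards: the interdictor is the \emph{minimizer} of $g(\mu,x)$, and removing a dummy arc changes nothing while removing a real arc weakly decreases the surviving flow. Hence an optimal interdictor in your padded instance spends all $\Gamma$ units on real arcs, and the padded value with budget $\Gamma$ equals the $\Gamma$-budget value of the \emph{original} network --- not its $2$-budget value, which is the quantity whose hardness you are trying to inherit. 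The reduction therefore does not preserve the answer. The fix is to pad in the opposite direction: add $\Gamma-2$ parallel $s$-$t$ arcs (arc-disjoint from everything) each of capacity $C$ strictly larger than the total capacity of the original network. The flow player saturates these arcs (surviving flow is monotone in the flow on them for every fixed $\mu$), and then cutting each such arc gains the interdictor $C$, which strictly dominates any unit of budget spent in the original network; so every optimal $\mu$ cuts all $\Gamma-2$ padding arcs, leaving exactly budget $2$ for the original network, and the padded optimal value equals the original $2$-budget value of $\opt_{\text{ADP}}^{\text{Path}}$. With that replacement your argument goes through.
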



We next show that one can compute the optimal value of the randomized network interdiction problem in both arc-based and path-based formulations in polynomial-time if the interdictor is capable to remove only one arc. 
\begin{theorem}
If $\Gamma=1$, then $\opt_{\text{RNI}}=\opt_{\text{RNI}}^{\text{Path}}$ and an optimal mixed strategy can be computed in polynomial-time. 
\end{theorem}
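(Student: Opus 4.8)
The plan is to prove the two claims separately. First I would establish the equality $\opt_{\text{RNI}}=\opt_{\text{RNI}}^{\text{Path}}$ for $\Gamma=1$, and then argue that the common value can be computed by a polynomially-sized linear program. By Theorems~\ref{thm:RNI=Adp} and \ref{thm:RNI=Adp-Path}, it suffices to work with the adaptive formulations $\opt_{\text{ADP}}$ and $\opt_{\text{ADP}}^{\text{Path}}$. Since an arc-based flow $x$ always dominates its path-based counterpart (routing through intermediate nodes can only help the flow player recover from a deletion), we always have $\opt_{\text{ADP}}^{\text{Path}}\leq \opt_{\text{ADP}}$, so the work lies in the reverse inequality when $\Gamma=1$.

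For the reverse inequality, I would take an optimal arc-based adaptive flow $x$ and decompose it into a sum of $s$-$t$-path flows plus possibly some cycles, discarding the cycles. The key observation for $\Gamma=1$ is that deleting a single arc $e$ removes, from an arc-based flow, exactly the flow amount $x_e$ that crosses $e$; by flow conservation and the max-flow/min-cut structure, this equals the total path-flow through $e$ in any path decomposition. Thus for a single deletion the arc-based payoff $f(e,x)=\val(x)-x_e$ and the path-based payoff $g(e,x)=\val(x)-\sum_{P:e\in P}x_P$ coincide on a path-decomposed flow, because with only one arc removed no path loses flow twice and the ``$\max\{0,\,\cdot\,\}$'' truncation in $g$ never activates beyond a single unit. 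Taking the minimum over the single deleted arc then gives $\min_{\mu}g(\mu,x)=\min_{\mu}f(\mu,x)$, which yields $\opt_{\text{ADP}}^{\text{Path}}\geq \opt_{\text{ADP}}$ and hence equality.

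For the polynomial-time computability, I would write the $\Gamma=1$ adaptive problem as a max-min linear program. The flow player maximizes a variable $z$ subject to $x\in\mathcal{X}$ and $z\leq \val(x)-x_e$ for every arc $e\in E$; here there are only $|E|$ scenarios (one per arc), so the constraint set is of polynomial size. This is a standard linear optimization problem solvable in polynomial time, and by the equality just established its value is simultaneously $\opt_{\text{RNI}}$ and $\opt_{\text{RNI}}^{\text{Path}}$. To extract an optimal \emph{mixed strategy} for the interdictor rather than just the value, I would invoke LP duality: the dual of the above program is exactly the interdictor's problem of choosing probabilities $\alpha_e$ over the $|E|$ single-arc scenarios, and an optimal dual solution furnishes the desired distribution $\alpha\in\Delta(\Omega)$ in polynomial time.

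The main obstacle I anticipate is the careful justification that the truncation operator in $g(\mu,x)$ causes no discrepancy between the arc-based and path-based payoffs when $\Gamma=1$; one must verify that a path decomposition of an optimal arc-based flow loses no value under single-arc deletion, i.e.\ that no rerouting advantage is available to the arc model that the path model cannot match. This hinges crucially on $\Gamma=1$ and genuinely fails for $\Gamma\geq 2$ (consistent with the NP-hardness of the path-based problem), so the argument must isolate precisely where the single-deletion assumption is used.
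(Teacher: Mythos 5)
Your proposal is essentially the paper's own argument: both rest on the observation that for $\Gamma=1$, once cycle flow is eliminated, removing a single arc $e$ leaves exactly $\val(x)-x_e$ units reaching the sink in \emph{both} the arc-based and path-based models, so $\opt_{\text{RNI}}=\opt_{\text{RNI}}^{\text{Path}}$ and the game collapses to $\min_{\alpha\in \Delta}\max_{x\in \mathcal{X}}\big(\val(x)-\sum_{e\in E}\alpha_e x_e\big)$. Your scenario-constraint program (maximize $z$ subject to $z\leq \val(x)-x_e$ for each of the $|E|$ single-arc scenarios, recovering $\alpha$ from the dual) is precisely the dual side of the paper's linear program~\eqref{pro:MinCut}, obtained there by dualizing the inner maximum-flow problem, and the obstacle you flag — that rerouting gives the arc model no advantage under a single deletion — is handled in the paper by the same one-line reduction to flows without cycle flow.
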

\begin{proof}
For the case of $\Gamma=1$, a mixed strategy for the interdictor corresponds to assign a nonnegative value to each arc, as the probability of removing that arc. More precisely, the set of mixed strategies for the interdictor is given by the unit simplex $\Delta$ of dimension $|E|$, that is,
 \begin{align*}
\Delta=\Big\{\alpha=(\alpha_e)_{e\in E}\mid \sum_{e\in E}\alpha_e=1,\alpha_e\geq 0\Big\}.
\end{align*}

Note that the flows on cycles are supposed to be zero since the flows on cycles do not contribute to the flow value. As a results, if the flow player chooses a flow $x$ (no matter in the arc-based or path-based formulation) and the interdictor decides to remove arc $e$, then the amount of flow reaching the sink will be 
\begin{align*}
\val(x)-x_e.
\end{align*}  
Hence, if the interdictor chooses a mixed strategy $\alpha\in \Delta$ and the flow player chooses a flow $x\in \mathcal{X}$, the payoff function is given by 
\begin{align*}
f(\alpha,x):=\sum_{e\in E} \alpha_e(\val(x)-x_e)=\val(x)-\sum_{e\in E}\alpha_e x_e.
\end{align*}
Therefore, Problems~\eqref{pro:RNI} and \eqref{pro:RNI-path} can be simplified as follows:
\begin{align*}
\opt_{\text{RNI}}=\opt_{\text{RNI}}^{\text{Path}}=\min_{\alpha\in \Delta}~\max_{x\in \mathcal{X}}& ~\val(x)-\sum_{e\in E}\alpha_e \cdot x_e.
\end{align*}
By considering the dual problem of the inner maximization problem, the above problem can be written as follows:
\begin{align}
  \label{pro:MinCut}
  \begin{aligned}
    \opt_{\text{RNI}}=\opt_{\text{RNI}}^{\text{Path}}=&\min && \sum_{e\in E}u_e  \rho_e\\
    &\text{s.t.}
      &&\begin{aligned}[t]
        \rho_e+\alpha_e+\pi_{v}-\pi_w		&\geq 0 		&&	\forall e=(v,w)\in E,\\
        \pi_t-\pi_s				&\geq 1,			&&\\
        \sum_{e\in E}\alpha_e&=1\\
        		\alpha_e,\rho_e		&\geq 0 		&&	\forall e\in E.
      \end{aligned}
  \end{aligned}
\end{align}
 Therefore, one can determine an optimal mixed strategy by solving this linear optimization problem. This completes the proof of the theorem.
\end{proof}
%
%


\section{On the power of randomization}
In this section, we provide tight bounds on the ratio of the optimal value of the network interdiction problem to that of randomized versions. 
In particular, our main result is the following theorem. 

\begin{theorem}
\label{thm:main}It is always true that
\begin{align}
\label{eq:NI/RNI<=T+1} \frac{\opt_\text{NI}}{\opt_\text{RNI}}&\leq \Gamma+1,\\
\label{eq:NI/RNI-path<=T+1} 
\frac{\opt_\text{NI}}{\opt_\text{RNI}^{\text{Path}}}&\leq \Gamma+1,\\
\label{eq:RNI/RNI-path<=T} 
\frac{\opt_\text{RNI}}{\opt_\text{RNI}^{\text{Path}}}&\leq \Gamma.
\end{align}
and these bounds are tight.
\end{theorem}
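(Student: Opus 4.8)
The plan is to prove the three inequalities through the equivalences already established, namely $\opt_{\text{NI}}=\opt_{\text{NI}}^{\text{Path}}$, $\opt_{\text{RNI}}=\opt_{\text{ADP}}$ (Theorem~\ref{thm:RNI=Adp}) and $\opt_{\text{RNI}}^{\text{Path}}=\opt_{\text{ADP}}^{\text{Path}}$ (Theorem~\ref{thm:RNI=Adp-Path}). Write $N=\opt_{\text{NI}}$, $R=\opt_{\text{RNI}}$ and $R^{P}=\opt_{\text{RNI}}^{\text{Path}}$. Since a path-based flow is also an arc-based flow and $g(\mu,x)\le f(\mu,x)$, while $\opt_{\text{ADP}}\le\opt_{\text{NI}}$ is noted above, one has the chain $R^{P}\le R\le N$; in particular all three ratios are at least $1$. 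The key observation is that \eqref{eq:NI/RNI-path<=T+1} implies \eqref{eq:NI/RNI<=T+1} (because $R^{P}\le R$), so it suffices to establish the two lower bounds $R^{P}\ge N/(\Gamma+1)$ and $R^{P}\ge R/\Gamma$ on the weakest quantity $R^{P}=\opt_{\text{ADP}}^{\text{Path}}$; the three displayed inequalities then follow at once. Both bounds are proved by exhibiting an explicit path-based flow that survives the deletion of any $\Gamma$ arcs.

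For $R^{P}\ge N/(\Gamma+1)$ I would first reformulate $N=\opt_{\text{NI}}=\min_{\mu}\max_{x\in\mathcal X}f(\mu,x)$ as a \emph{fault-tolerance} statement: deleting any set $S$ of at most $\Gamma$ arcs leaves a network of maximum flow at least $N$, which by max-flow--min-cut means that \emph{every} $s$--$t$ cut retains capacity at least $N$ after its $\Gamma$ largest arcs are removed. Viewing the arcs of a fixed cut as parallel capacities, a short calculation then shows that every cut has capacity at least $N$ once each capacity $u_e$ is truncated to $\min\{u_e,\,N/(\Gamma+1)\}$. Applying max-flow--min-cut to these truncated capacities produces a flow $x$ of value $N$ with $x_e\le N/(\Gamma+1)$ for every arc. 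Decomposing $x$ into paths and deleting any $\Gamma$ arcs destroys path-flow of total value at most $\sum_{e\in E(\mu)}x_e\le \Gamma\cdot N/(\Gamma+1)$, so $g(\mu,x)\ge N-\Gamma N/(\Gamma+1)=N/(\Gamma+1)$ for every $\mu\in\Omega$; hence $\opt_{\text{ADP}}^{\text{Path}}\ge\min_{\mu}g(\mu,x)\ge N/(\Gamma+1)$, which is \eqref{eq:NI/RNI-path<=T+1}.

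The bound $R^{P}\ge R/\Gamma$ (equivalently \eqref{eq:RNI/RNI-path<=T}) is where the main difficulty lies. The same truncation argument, applied inside the network whose capacities are an optimal arc-adaptive flow $x^{*}$ (which is fault tolerant at level $R$ by Theorem~\ref{thm:RNI=Adp}), only yields the weaker bound $R^{P}\ge R/(\Gamma+1)$; squeezing out the last factor requires using that $R$ already builds in rerouting. My plan here is to fix an optimal interdictor distribution $\alpha^{*}$ for the path problem, so that $R\le\max_{x\in\mathcal X}f(\alpha^{*},x)$, take an arc best response $x$ together with its per-scenario reroutings $y^{\mu}\le x$ (with $y^{\mu}$ vanishing on $E(\mu)$ and $\val(y^{\mu})=f(\mu,x)$), and assemble the single path-based flow $w_{P}:=\sum_{\mu}\alpha^{*}(\mu)\,y^{\mu}_{P}$, which is feasible because its congestion on each arc is at most $x_e$. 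Expanding $\sum_{\nu}\alpha^{*}(\nu)g(\nu,w)$ reduces the claim to showing that, in the $w$-weighted average, every routed path survives a fresh draw $\nu\sim\alpha^{*}$ with probability at least $1/\Gamma$. Establishing this survival estimate -- controlling the path survival probabilities $s_{\alpha^{*}}(P)$ at the optimum, where the loss factor $\Gamma$ (rather than $\Gamma+1$) must emerge -- is the step I expect to be the crux, and it is also what pins down the constant to exactly $\Gamma$.

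Finally, for tightness I would exhibit, for each of the three inequalities separately, a parametric family of instances in the spirit of Figure~\ref{fig:NI-RNI-RNI(P)} -- a source feeding a cluster of $\Gamma+1$ near-parallel routes through a shared bottleneck -- and compute $N$, $R$ and $R^{P}$ in closed form, choosing the capacities so that the corresponding ratio tends to $\Gamma+1$, $\Gamma+1$ and $\Gamma$, respectively, as the capacity parameter grows. Since the instance of Figure~\ref{fig:NI-RNI-RNI(P)} already separates the three quantities but does not saturate any bound, the tightness constructions must be tuned individually; verifying each reduces to the routine optimizations of the kind carried out for Figure~\ref{fig:NI-RNI-RNI(P)}.
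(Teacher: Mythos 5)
Your chain $\opt_{\text{RNI}}^{\text{Path}}\le\opt_{\text{RNI}}\le\opt_{\text{NI}}$ and your truncation proof of the two $(\Gamma+1)$ bounds are correct: since $\opt_{\text{NI}}=N$ means every cut keeps capacity at least $N$ after its $\Gamma$ largest arcs are deleted, the case split on whether the $(\Gamma+1)$-st largest capacity in a cut exceeds $N/(\Gamma+1)$ shows every cut keeps capacity at least $N$ under the truncated capacities $\min\{u_e,N/(\Gamma+1)\}$, and the resulting low-congestion flow loses at most $\Gamma N/(\Gamma+1)$ of its path flow under any $\Gamma$ deletions. Note that what you have built is exactly a feasible point of the paper's LO model \eqref{pro:LO} with $\theta=N/(\Gamma+1)$, together with a special case of $\opt_{\text{LO}}\le\opt_{\text{RNI}}^{\text{Path}}$ (Lemma~\ref{lem:LO<=NI}); the paper instead analyzes the LO \emph{optimum} $(x^*,\theta^*)$ via the cuts $S'$, $S''$ of Lemma~\ref{lem:LO-cut} and extracts both $(\Gamma+1)$ bounds from Lemma~\ref{lem:main}. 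Up to this point your route is a legitimate and somewhat more elementary alternative.

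The genuine gap is inequality \eqref{eq:RNI/RNI-path<=T}, which you reduce to an unproven ``survival estimate,'' and that estimate is false for your construction. Take two parallel unit-capacity arcs from $s$ to $t$ and $\Gamma=1$: every $\alpha$ is optimal for the path game, so take $\alpha^*=(\tfrac12,\tfrac12)$; then $\opt_{\text{RNI}}=\opt_{\text{RNI}}^{\text{Path}}=1$, the arc best response is $x=(1,1)$ with reroutings $y^{\mu_1}=(0,1)$, $y^{\mu_2}=(1,0)$, so $w=(\tfrac12,\tfrac12)$, each of the two routed paths survives a fresh draw with probability $\tfrac12$ --- not $1/\Gamma=1$ --- and $\sum_{\nu}\alpha^*(\nu)\,g(\nu,w)=\tfrac12<\opt_{\text{RNI}}/\Gamma=1$. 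So both the per-path claim and its $w$-weighted-average version fail, and your $w$ certifies only a $1/(\Gamma+1)$-type fraction; the true path best response (here $x=(1,1)$ itself) is not produced by averaging the per-scenario reroutings, so the approach cannot pin down the constant $\Gamma$ as it stands. The paper obtains the factor $\Gamma$ by an entirely different mechanism: arc counting across cuts at the LO optimum, namely $\opt_{\text{RNI}}\le\val(x^*)-\theta^*$ from the cut $S''$ with $|B(S'',\theta^*)|<\Gamma$ (Parts~\eqref{NI<=Val(x)} and \eqref{RNI<=Val(x)-delta} of Lemma~\ref{lem:cases}, using $\min_{\mu}f(\mu,x)=\min_{S}R(x,S)$), combined with the dichotomy $\opt_{\text{LO}}<\theta^*$ (which forces $\opt_{\text{RNI}}=\opt_{\text{LO}}$) versus $\opt_{\text{LO}}\ge\theta^*$ (which gives $\val(x^*)-\theta^*=\opt_{\text{LO}}+(\Gamma-1)\theta^*\le\Gamma\cdot\opt_{\text{LO}}$). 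Finally, your tightness discussion is a plan rather than a proof: the paper requires two distinct explicit families (Figure~\ref{fig:RVal-NInt}), and the instance saturating \eqref{eq:RNI/RNI-path<=T} is a four-node network with an extra high-capacity arc, not a retuning of Figure~\ref{fig:NI-RNI-RNI(P)}; you supply neither the instances nor the closed-form values of the three quantities.
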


To prove this theorem, we require several lemmas. The core of the our analysis is based on the following parametric linear optimization problem:
\begin{align}
  \label{pro:LO}
  \begin{aligned}
    \opt_{\text{LO}}(\theta):=&\max 		&& \val(x)-\Gamma  \theta \\
    &\text{s.t.}	 &&
    \begin{aligned}[t]
                        \sum_{e\in \delta^+(v)}x_e-\sum_{e\in \delta^-(v)}x_e   &=0,             		&&\forall v\in V\setminus\{s,t\},\\
                         0\leq x_e	&\leq u_e,       		&&\forall e\in E,\\
        				 x_e	&\leq \theta, 	&&\forall e\in E.
        \end{aligned}
  \end{aligned}
\end{align}
We let $ \opt_{\text{LO}}:=\max_{\theta\geq 0}  \opt_{\text{LO}}(\theta)$ and refer to the latter problem as the LO model. This model is examined by Bertsimas \emph{et al.}~\cite{BertNasrStil12} to find approximations for Problems~\eqref{pro:Adp} and \eqref{pro:Adp-path}. They demonstrate the ability of the LO model for obtaining  good solutions in a computational study. 

We first show that the optimal value of the LO model gives a lower bound on $\opt_{\text{RNI}}^{\text{Path}}$, $\opt_{\text{RNI}}$, and $\opt_{\text{NI}}$. 

\begin{lemma}
\label{lem:LO<=NI}
We have
\begin{align}
\label{eq:LP<=RNI}
\opt_{\text{LO}}\leq \opt_{\text{RNI}}^{\text{Path}}\leq \opt_{\text{RNI}}\leq \opt_{\text{NI}}.
\end{align}
\end{lemma}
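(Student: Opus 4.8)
The plan is to prove the three inequalities in the chain \eqref{eq:LP<=RNI} separately, working from right to left, since the two rightmost inequalities are essentially monotonicity statements while the leftmost one requires a constructive flow-decomposition argument and is where the real work lies.

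For $\opt_{\text{RNI}}\leq \opt_{\text{NI}}$ I would observe that every pure strategy $\mu\in\Omega$ corresponds to a degenerate mixed strategy in $\Delta(\Omega)$, namely the point mass at $\mu$, for which $f(\alpha,x)=f(\mu,x)$. Hence the feasible region of the outer minimization in \eqref{pro:RNI} contains that of \eqref{pro:NI}, and minimizing the same inner value $\max_{x\in\mathcal{X}}f(\cdot,x)$ over the larger set $\Delta(\Omega)$ can only decrease it. For $\opt_{\text{RNI}}^{\text{Path}}\leq \opt_{\text{RNI}}$ I would fix a mixed strategy $\alpha$ and compare the two inner maximizations. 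Any path-based flow $x\in\mathcal{X}_P$ induces an arc-based flow $\bar x\in\mathcal{X}$ with $\bar x_e=\sum_{P\ni e}x_P$ and $\val(\bar x)=\val(x)$, and the key pointwise inequality is $g(\mu,x)\leq f(\mu,\bar x)$ for every scenario $\mu$: the flow carried by those paths of $x$ that avoid $E(\mu)$ is a feasible flow in $G(\mu,\bar x)$, whereas $f(\mu,\bar x)$ is the maximum such flow and additionally permits rerouting. Summing against $\alpha$ gives $\max_{x\in\mathcal{X}_P}\sum_{\mu}\alpha(\mu)g(\mu,x)\leq \max_{\bar x\in\mathcal{X}}\sum_{\mu}\alpha(\mu)f(\mu,\bar x)$, and taking the minimum over $\alpha$ yields the inequality.

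The main step is $\opt_{\text{LO}}\leq \opt_{\text{RNI}}^{\text{Path}}$. Here I would first invoke Theorem~\ref{thm:RNI=Adp-Path} to rewrite $\opt_{\text{RNI}}^{\text{Path}}=\opt_{\text{ADP}}^{\text{Path}}=\max_{x\in\mathcal{X}_P}\min_{\mu\in\Omega}g(\mu,x)$, so it suffices to exhibit a single path-based flow whose worst-case value over scenarios is at least $\opt_{\text{LO}}$. I would take an optimal solution $(x^*,\theta^*)$ of the LO model, so that $\opt_{\text{LO}}=\val(x^*)-\Gamma\theta^*$ with $x^*_e\leq\theta^*$ for all $e$. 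Decomposing $x^*$ into $s$-$t$ paths and cycles and discarding the cycles (which carry no net flow to $t$), I obtain $\tilde x\in\mathcal{X}_P$ with $\val(\tilde x)=\val(x^*)$ and $\sum_{P\ni e}\tilde x_P\leq x^*_e\leq\theta^*$ for every arc $e$. For an arbitrary scenario $\mu$ removing arcs $e_1,\dots,e_\Gamma$, the destroyed flow is exactly that carried by paths meeting $E(\mu)$, so a union bound gives $\val(\tilde x)-g(\mu,\tilde x)=\sum_{P:\,P\cap E(\mu)\neq\emptyset}\tilde x_P\leq \sum_{i=1}^{\Gamma}\sum_{P\ni e_i}\tilde x_P\leq \Gamma\theta^*$. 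Thus $g(\mu,\tilde x)\geq \val(x^*)-\Gamma\theta^*=\opt_{\text{LO}}$ for every $\mu$, whence $\min_{\mu}g(\mu,\tilde x)\geq\opt_{\text{LO}}$ and therefore $\opt_{\text{ADP}}^{\text{Path}}\geq\opt_{\text{LO}}$.

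I expect the union-bound step to be the only delicate point: one must check that the total flow on paths hit by $E(\mu)$ is bounded by the sum of the per-arc quantities $\sum_{P\ni e_i}\tilde x_P$, which is valid because a path using several removed arcs is counted once on the left but possibly several times on the right, so the inequality points the right way. This, combined with the per-arc bound $x^*_e\leq\theta^*$ supplied by the LO constraint $x_e\leq\theta$, is precisely what converts the objective $\val(x^*)-\Gamma\theta^*$ into a guaranteed surviving flow value. Getting the flow decomposition and the reduction to $\opt_{\text{ADP}}^{\text{Path}}$ right is the crux; the remaining two inequalities are routine comparisons of feasible regions.
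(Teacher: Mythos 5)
Your proposal is correct, and on the decisive step it takes a genuinely different route from the paper. All three proofs in the paper pass through the adaptive reformulations: $\opt_{\text{RNI}}=\opt_{\text{ADP}}$ and $\opt_{\text{RNI}}^{\text{Path}}=\opt_{\text{ADP}}^{\text{Path}}$ (Theorems~\ref{thm:RNI=Adp} and \ref{thm:RNI=Adp-Path}), after which the rightmost inequality is the minimax inequality $\max\min\leq\min\max$ and the middle one is asserted as the path-based flow player being more restricted --- your point-mass embedding of $\Omega$ into $\Delta(\Omega)$ and your pointwise bound $g(\mu,x)\leq f(\mu,\bar x)$ via the induced arc flow $\bar x_e=\sum_{P\ni e}x_P$ prove the same two facts, with the middle step made explicit rather than declared straightforward (and your version of the rightmost inequality does not even need Theorem~\ref{thm:RNI=Adp}). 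The real divergence is the leftmost inequality. The paper also reduces to showing $\val(x^*)-\Gamma\theta^*\leq\min_{\mu}g(\mu,x^*)$ for a path decomposition of an optimal LO solution, but proves it by an exchange argument: assuming fewer than $\Gamma$ arcs at the bound $\theta^*$ can be chosen with at most one per flow-carrying path, it perturbs $(x^*,\theta^*)$ by $\epsilon$ along $k<\Gamma$ covering paths and gains $(\Gamma-k)\epsilon$ in the objective, contradicting optimality; the existence of $\Gamma$ such ``independent'' saturated arcs then pins the worst-case damage at exactly $\Gamma\theta^*$. Your union bound
\begin{align*}
\val(\tilde x)-g(\mu,\tilde x)=\sum_{P:\,P\cap E(\mu)\neq\emptyset}\tilde x_P\leq\sum_{e\in E(\mu)}\sum_{P\ni e}\tilde x_P\leq\Gamma\theta^*
\end{align*}
replaces all of this: it is shorter, needs only \emph{feasibility} of $(x^*,\theta^*)$ (the constraint $x_e\leq\theta$), avoids the $\epsilon$-perturbation and the paper's side assumption that $\opt_{\text{LO}}>0$, and your remark about multiply-hit paths being counted once on the left is exactly the right justification. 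What the paper's heavier argument buys is the matching lower bound on the damage --- a scenario achieving loss $\Gamma\theta^*$ --- a structural fact in the spirit of Lemma~\ref{lem:LO-cut} that the paper exploits later for its tightness and approximation results (e.g., Part~\eqref{RNI=Val(x)ifx=OPT} of Lemma~\ref{lem:cases}); for Lemma~\ref{lem:LO<=NI} itself this extra information is not needed, and your argument proves precisely what the statement requires.
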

\begin{proof}
By Theorems~\ref{thm:RNI=Adp} and \ref{thm:RNI=Adp-Path}, we know $\opt_{\text{RNI}}=\opt_{\text{ADP}}$ and $\opt_{\text{RNI}}^{\text{Path}}=\opt_{\text{ADP}}^{\text{Path}}$, respectively. Thus, it suffices to show that
\begin{align*}
\opt_{\text{LO}}\leq \opt_{\text{ADP}}^{\text{Path}}\leq \opt_{\text{ADP}}\leq \opt_{\text{NI}}.
\end{align*}
Here, the first inequality from the right  is immediate since
\begin{align*}
\opt_{\text{ADP}}=\max_{x\in \mathcal{X}}~\min_{\mu\in \Omega}~f(\mu,x)&\leq\min_{\mu\in \Omega}~ \max_{x\in \mathcal{X}}~f(\mu,x)=\opt_{\text{NI}}.
\end{align*}  
The second inequality is intuitively straightforward because  the flow player in Problem~\eqref{pro:Adp-path} is more restricted than Problem~\eqref{pro:Adp}. 

Therefore, it remains to prove $\opt_{\text{LO}}\leq \opt_{\text{ADP}}^{\text{Path}}$. We assume that
the optimal value of the LO model is strictly positive since otherwise the statement is trivial. Let  $(x^*,\theta^*)$ be an optimal solution for the LO model and $(x^*_P)_{P\in\mathcal{P}}$ be an
 arbitrary path-decomposition of $x^*$.
It is sufficient to prove that 
 \begin{align}
 \label{eq:LO<=ADP-Path}
 \opt_{\text{LO}}=\val(x^*)-\Gamma \theta^*\leq ~\min_{\mu\in \Omega}&\quad g(\mu,x^*).
\end{align}
To this end, we show that there exist $\Gamma$ arcs, say
 $e_1,\ldots,e_\Gamma$, with $x^*_{e_1}=\ldots=x^*_{e_\Gamma}=\theta^*$ such
 that at most one of them lies on each path $P$ with $x^*_P>0$. 
Suppose, by contradiction, that there are at most $k$ ($k<\Gamma$)
arcs $e_1,\ldots,e_k$ with $x^*_{e_1}=\ldots=x^*_{e_k}=\theta^*$ such that
each path $P$ with $x^*_P>0$ contains at most one of these arcs. This
implies that there are $k$ flow-carrying paths $P_1,\ldots,P_k$ such
that each arc $e$ with $x^*_e=\theta^*$ lies on one of these paths. 

We now define a new solution $(x,\theta)$ for
the LO model by setting $\theta:=\theta^*-\epsilon$ and
$x_P:=x^*_P-\epsilon$, if $P=P_1,\ldots,P_k$, and $x_P:=x^*_P$,
otherwise. The $\epsilon$ is strictly positive and is chosen small
enough to ensure that $\sum_{P\in \mathcal{P}: e\in P}x_P\leq \theta$. The objective function
value of the LO model for $(x, \theta)$ is 
\begin{align*}
\sum_{P\in \mathcal{P}}x_P-\Gamma \theta
                            &=\sum_{P\in\mathcal{P} \setminus\{P_1,\ldots,P_k\}}x_P+\sum_{i=1}^{k}x_P-\Gamma  \theta\\
                            &=\sum_{P\in\mathcal{P} \setminus\{P_1,\ldots,P_k\}}x^*_P+\sum_{i=1}^{k}(x^*_P-\epsilon)-\Gamma  (\theta^*-\epsilon)\\
                            &=\sum_{P\in\mathcal{P}}x^*_P-\Gamma  \theta^* +(\Gamma-k) \epsilon> \sum_{P\in\mathcal{P}}x^*_P-\Gamma  \theta^*.
\end{align*}
This contradicts with the optimality of $(x^*, \theta^*)$, which proves the validity of Inequality \eqref{eq:LO<=ADP-Path}. This completes the proof of the lemma.
\end{proof}

In what follows, we exploit structural properties of the LO model that are needed for the proof of Theorem \ref{thm:main}. We first give some basic definitions and notation. An \emph{$s$-$t$-cut} is defined as a subset $S\subseteq V$ of nodes with $s\in S$ and $t\in V\setminus S$. The \emph{capacity} $\ca(S)$ of $S$
is defined as the sum of the capacities of the arcs going from $S$ to $V\setminus S$, that is, $\ca(S):=\sum_{e\in\delta^+(S)} u_e$. Here and subsequently, $\delta^+(S)$ denotes the set of arcs $e=(v,w)$ with $v\in S$ and $w\in V\setminus S$. We use ${\mathcal{S}}$ to denote the set of all $s$-$t$-cuts. 
For a given value $\theta\geq 0$, we let $u_e(\theta):=\min\{u_e,\theta\}$, and we let $\ca(S,\theta):=\sum_{e\in \delta^+(S)}u_e(\theta)$ denote the capacity of the cut with respect to the arc capacities $u(\theta)$.
We let $A(S,\theta)$ denote the set of all arcs $e\in \delta^+(S)$ with $\theta\leq  u_e$, and we let $B(S,\theta)$ denote the set of all arcs $e\in \delta^+(S)$ with $\theta< u_e$.

\begin{lemma}
\label{lem:LO-cut}
Suppose that $(x^*,\theta^*)$ is an optimal solution to the LO model with maximum value $\theta^*$ (i.e., if there are multiple optimal solutions, the one with the largest value $\theta^*$ is selected). 
\begin{enumerate}[(i)]
\item\label{it:LO-cut1} There exists an $s$-$t$-cut $S'$ so that 
\begin{align*}
\val(x^*)=\ca(S',\theta^*) && \text{and} && |A(S',\theta^*)|\geq \Gamma.
\end{align*} 
\item\label{it:LO-cut2} There exists an $s$-$t$-cut $S''$ so that 
\begin{align*}
\val(x^*)=\ca(S'',\theta^*)  && \text{and} &&|B(S'',\theta^*)|< \Gamma.
\end{align*} 
\end{enumerate}
\end{lemma}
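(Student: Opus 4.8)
The plan is to establish both parts by analyzing the structure of the max-flow/min-cut in the capacitated network $G$ with the truncated capacities $u_e(\theta^*)=\min\{u_e,\theta^*\}$, and exploiting the maximality of $\theta^*$ among all optimal solutions. The starting observation is that for fixed $\theta=\theta^*$, the inner problem in \eqref{pro:LO} is just a maximum flow problem in $G$ with capacities $u_e(\theta^*)$, so by the max-flow/min-cut theorem there is an $s$-$t$-cut $S$ with $\val(x^*)=\ca(S,\theta^*)$. The whole task is to choose such a minimum cut carefully so that the side conditions on $|A(S,\theta^*)|$ and $|B(S,\theta^*)|$ hold.

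For part \eqref{it:LO-cut1}, I would argue by contradiction: suppose \emph{every} minimum cut $S$ (with respect to $u(\theta^*)$) has $|A(S,\theta^*)|<\Gamma$, i.e., fewer than $\Gamma$ arcs crossing the cut are saturated at the cap $\theta^*$ (those with $\theta^*\le u_e$). The key idea is that if a minimum cut has fewer than $\Gamma$ arcs "pinned" by the $\theta$-constraint, then slightly \emph{increasing} $\theta$ beyond $\theta^*$ increases $\ca(S,\theta)$ only on the arcs in $B(S,\theta^*)$ (the strictly-truncated arcs), and one can show the max-flow value rises at a rate of at most $|B(S,\theta^*)|\le|A(S,\theta^*)|<\Gamma$ per unit of $\theta$, while the penalty term $-\Gamma\theta$ drops at rate $\Gamma$. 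Thus the objective $\val(x)-\Gamma\theta$ would be nondecreasing as $\theta$ grows, contradicting either the optimality of $(x^*,\theta^*)$ or, in the case of equality, the assumption that $\theta^*$ is the \emph{largest} $\theta$ attaining the optimum. To make this rate argument precise I would use that for $\theta$ in a small interval above $\theta^*$, the same cut $S$ remains minimum and $\ca(S,\theta)=\ca(S,\theta^*)+|B(S,\theta^*)|\,(\theta-\theta^*)$, so $\val(x)$ for the new $\theta$ equals $\ca(S,\theta)$.

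For part \eqref{it:LO-cut2}, the symmetric idea applies in the opposite direction: I would show there is a minimum cut $S''$ with $|B(S'',\theta^*)|<\Gamma$, i.e., fewer than $\Gamma$ crossing arcs are \emph{strictly} above the cap. Here the argument by contradiction supposes every minimum cut has $|B(S,\theta^*)|\ge\Gamma$; then \emph{decreasing} $\theta$ slightly below $\theta^*$ decreases $\ca(S,\theta)$ at rate $|B(S,\theta^*)|\ge\Gamma$ on the arcs strictly above the cap, so $\val(x)$ drops by at least $\Gamma(\theta^*-\theta)$ while the penalty term rises by exactly $\Gamma(\theta^*-\theta)$, keeping the objective at least as large—and this again contradicts the maximality of $\theta^*$ (we would have produced an optimal solution with a strictly smaller $\theta$, which is consistent with optimality, so I must instead derive the contradiction by showing the objective strictly increases, or reconcile with the tie-breaking rule). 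The delicate point is getting the tie-breaking direction right: $\theta^*$ is chosen maximal, so the decreasing-$\theta$ perturbation in part \eqref{it:LO-cut2} cannot contradict maximality directly and must instead contradict optimality by yielding a strictly larger objective value; this requires a strict inequality $|B(S,\theta^*)|>\Gamma$ at some point, or a more careful accounting using that decreasing $\theta$ can only lose the arcs in $A\setminus B$ from being pinned.

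The main obstacle I anticipate is the precise behavior of the minimum cut as $\theta$ varies and the subtle interplay between the two definitions $A(S,\theta^*)=\{e\in\delta^+(S):\theta^*\le u_e\}$ and $B(S,\theta^*)=\{e\in\delta^+(S):\theta^*<u_e\}$, which differ exactly on arcs with $u_e=\theta^*$. Because the minimum cut need not be unique and can change combinatorially at the critical value $\theta^*$, I would need to track which cut remains minimum on each side of $\theta^*$ and choose $S'$ from the right-hand perturbation and $S''$ from the left-hand perturbation; the strictness of the inequalities ($\ge\Gamma$ versus $<\Gamma$) must be matched to the strict/non-strict distinction between $A$ and $B$ and to the maximality of $\theta^*$, and getting these four strict/non-strict choices mutually consistent is the crux of the argument.
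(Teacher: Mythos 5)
There is a genuine gap: you have the two perturbation directions exactly swapped, and with your pairing neither part yields a contradiction. Part~\eqref{it:LO-cut1} (that $|A(S',\theta^*)|\geq\Gamma$) must come from perturbing $\theta$ \emph{downward}: choosing a cut $S'$ that is minimum for both $u(\theta^*)$ and $u(\theta^*-\epsilon)$ (which exists for $\epsilon$ small, by finiteness of the cut family, as the paper does), one gets $\opt_{\text{LO}}(\theta^*-\epsilon)=\opt_{\text{LO}}(\theta^*)+\epsilon\bigl(\Gamma-|A(S',\theta^*)|\bigr)$, since the truncated capacities drop at rate $|A|$ while the penalty term refunds $\Gamma$ per unit; if $|A|<\Gamma$ this strictly beats the optimum, contradicting mere optimality of $\theta^*$ — no tie-breaking needed. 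Your upward perturbation for part~\eqref{it:LO-cut1} instead gives $\opt_{\text{LO}}(\theta^*+\epsilon)=\opt_{\text{LO}}(\theta^*)+\epsilon\bigl(|B|-\Gamma\bigr)$ with $|B|\leq|A|<\Gamma$, i.e., a \emph{strictly decreasing} objective, which is perfectly consistent with optimality; your claim that the objective "would be nondecreasing" contradicts your own rate computation ($|B|-\Gamma<0$), so no contradiction materializes.

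Symmetrically, part~\eqref{it:LO-cut2} (that $|B(S'',\theta^*)|<\Gamma$) is where the \emph{upward} perturbation and the maximality tie-breaking are needed: with $S''$ minimum for both $u(\theta^*)$ and $u(\theta^*+\epsilon)$, one has $\opt_{\text{LO}}(\theta^*+\epsilon)=\opt_{\text{LO}}(\theta^*)+\epsilon\bigl(|B(S'',\theta^*)|-\Gamma\bigr)$, and maximality of $\theta^*$ forces the \emph{strict} inequality $\opt_{\text{LO}}(\theta^*)>\opt_{\text{LO}}(\theta^*+\epsilon)$ (equality would exhibit an optimal solution with larger $\theta$), hence $|B|<\Gamma$. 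You correctly observed that your downward perturbation for part~\eqref{it:LO-cut2} "cannot contradict maximality directly" — producing an optimal solution with smaller $\theta$ contradicts nothing — but you left this unresolved rather than recognizing it as the signal that the directions should be exchanged: the non-strict comparison (optimality alone) pairs with the downward move and the set $A$, while the strict comparison (maximality) pairs with the upward move and the set $B$, precisely because decreasing $\theta$ pins arcs with $u_e\geq\theta^*$ and increasing $\theta$ releases only arcs with $u_e>\theta^*$. Once the directions are swapped back, your outline coincides with the paper's proof.
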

\begin{proof}
For each $\epsilon>0$, we have
\begin{align}
\label{eq:LO(theta)<=LO(theta-epsilon)}
\opt_{\text{LO}}(\theta^*)&\geq \opt_{\text{LO}}(\theta^*-\epsilon),\\
\label{eq:LO(theta)<=LO(theta+epsilon)}
\opt_{\text{LO}}(\theta^*)&> \opt_{\text{LO}}(\theta^*+\epsilon),
\end{align}
since $(x^*,\theta^*)$ is an optimal solution with maximum value $\theta^*$. In addition, there exists an $s$-$t$-cut $S'$ which is a minimum cut with respect to arc capacities $u(\theta^*)$ and arc capacities $u(\theta^*-\epsilon)$ for a very small $\epsilon> 0$. More precisely, it is enough to choose $\epsilon$ as follows:
\begin{align*}
\epsilon:=\frac{1}{|E|}\min_{S\in \mathcal{S}}\{\ca(S,\theta^*)-\val(x^*)\mid \ca(S,\theta^*)-\val(x^*)>0\}.
\end{align*}
Therefore, we can write
\begin{align*}
\opt_{\text{LO}}(\theta^*)&=\ca(S',\theta^*)-\Gamma\theta^*,\\
\opt_{\text{LO}}(\theta^*-\epsilon)&=\ca(S',\theta^*-\epsilon)-\Gamma(\theta^*-\epsilon)\\
							&=\sum_{e\in S'\setminus A(S',\theta^*)}u_e+\sum_{e\in A(S',\theta^*)}(\theta^*-\epsilon)-\Gamma\theta^*+\Gamma\epsilon\\
							&=\ca(S',\theta^*)-\epsilon|A(S',\theta^*)|-\Gamma\theta^*+\Gamma\epsilon.
\end{align*}
It then follows from Inequality~\eqref{eq:LO(theta)<=LO(theta-epsilon)} that $|A(S',\theta^*)|\geq \Gamma$.

We prove the second part of the lemma by a similar argument. There exists an $s$-$t$-cut $S''$, which is a minimum cut with respect to arc capacities $u(\theta^*)$ and $u(\theta^*+\epsilon)$ for a very small $\epsilon> 0$.  Therefore,
\begin{align*}
\opt_{\text{LO}}(\theta^*)&=\ca(S'',\theta^*)-\Gamma\theta^*,\\
\opt_{\text{LO}}(\theta^*+\epsilon)&=\ca(S'',\theta^*+\epsilon)-\Gamma(\theta^*+\epsilon)\\
						   &=\ca(S'',\theta^*)+\epsilon|B(S'',\theta^*)|-\Gamma\theta^*-\Gamma\epsilon.
\end{align*}
It now follows from Inequality~\eqref{eq:LO(theta)<=LO(theta+epsilon)} that $|B(S'',\theta^*)|< \Gamma$. 
\end{proof}

\begin{lemma}
\label{lem:cases}
Suppose that $(x^*,\theta^*)$ is an optimal solution to the LO model with maximum value $\theta^*$.  Then,
\begin{enumerate}[(i)]
\item\label{NI=LOifLO<val(x)}  $\opt_\text{NI}=\opt_\text{LO}$ if $\opt_\text{LO}< \frac{1}{\Gamma+1}\val(x^*)$;
\item\label{RNI=Val(x)ifLO<delta}  $\opt_\text{RNI}=\opt_\text{LO}$ if $\opt_\text{LO}< \theta^*$;
\item\label{NI<=Val(x)} $\opt_{\text{NI}}\leq \val(x^*)$;
\item\label{RNI=Val(x)ifx=OPT}  $\opt_\text{RNI}=\opt_\text{LO}$ if $x^*$ is a maximum flow for the nominal problem;
\item\label{RNI<=Val(x)-delta}  $\opt_{\text{RNI}}\leq \val(x^*)-\theta^* $;
\end{enumerate}
\end{lemma}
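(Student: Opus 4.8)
The plan is to prove the five parts out of order, since (ii) and (v) reduce to the others. Throughout I write $\opt_{\text{LO}}=\val(x^*)-\Gamma\theta^*$ and use the two cuts from Lemma~\ref{lem:LO-cut}: $S'$ with $\val(x^*)=\ca(S',\theta^*)$ and $|A(S',\theta^*)|\ge\Gamma$, and $S''$ with $\val(x^*)=\ca(S'',\theta^*)$ and $|B(S'',\theta^*)|<\Gamma$; the one recurring device is that the flow reaching $t$ in any $G(\mu)$ is at most the residual capacity of a fixed cut. For (iii), set $b:=|B(S'',\theta^*)|\le\Gamma-1$ and delete the $b$ arcs of $\delta^+(S'')$ with $u_e>\theta^*$ plus $\Gamma-b$ arbitrary arcs; every surviving forward arc of $S''$ has $u_e\le\theta^*$, so the residual capacity is $\ca(S'',\theta^*)-b\theta^*=\val(x^*)-b\theta^*\le\val(x^*)$, giving $\opt_{\text{NI}}\le\val(x^*)$. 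For (i), I would note that its hypothesis $\opt_{\text{LO}}<\tfrac1{\Gamma+1}\val(x^*)$ rearranges to $\val(x^*)<(\Gamma+1)\theta^*$; since $\val(x^*)=\ca(S',\theta^*)\ge|A(S',\theta^*)|\,\theta^*\ge\Gamma\theta^*$, this forces $|A(S',\theta^*)|=\Gamma$, and deleting exactly these $\Gamma$ arcs leaves residual capacity $\val(x^*)-\Gamma\theta^*=\opt_{\text{LO}}$, whence $\opt_{\text{NI}}\le\opt_{\text{LO}}$; Lemma~\ref{lem:LO<=NI} then yields equality.

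Parts (ii) and (v) follow quickly. The hypothesis $\opt_{\text{LO}}<\theta^*$ of (ii) also rearranges to $\val(x^*)<(\Gamma+1)\theta^*$, i.e.\ the hypothesis of (i); thus $\opt_{\text{NI}}=\opt_{\text{LO}}$, and the chain $\opt_{\text{LO}}\le\opt_{\text{RNI}}\le\opt_{\text{NI}}$ from Lemma~\ref{lem:LO<=NI} forces $\opt_{\text{RNI}}=\opt_{\text{LO}}$. For (v) I would split on $b=|B(S'',\theta^*)|$: if $b\ge1$, the interdiction from (iii) already yields residual capacity $\val(x^*)-b\theta^*\le\val(x^*)-\theta^*$, so $\opt_{\text{RNI}}\le\opt_{\text{NI}}\le\val(x^*)-\theta^*$; if $b=0$, then all forward arcs of $S''$ satisfy $u_e\le\theta^*$, so $\ca(S'')=\ca(S'',\theta^*)=\val(x^*)$, which, being a cut capacity, forces $\val(x^*)$ to equal the nominal maximum-flow value, i.e.\ $x^*$ is a maximum flow, and (iv) gives $\opt_{\text{RNI}}=\opt_{\text{LO}}=\val(x^*)-\Gamma\theta^*\le\val(x^*)-\theta^*$.

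This leaves (iv) as the real content. With $A:=A(S',\theta^*)$ and $m:=|A|\ge\Gamma$, I would have the interdictor play the mixed strategy $\alpha$ that removes a uniformly random $\Gamma$-subset of $A$, so each arc of $A$ is deleted with probability $\Gamma/m$. Bounding $f(\mu,y)$ by the residual capacity of $S'$ and taking expectations gives, for every flow $y$, $f(\alpha,y)\le\sum_{e\in\delta^+(S')\setminus A}y_e+(1-\tfrac{\Gamma}{m})\sum_{e\in A}y_e$. Viewing the right-hand side as a linear function of the forward cut-flows, its maximum over feasible flows that cross $S'$ only forward fills the low-capacity arcs $\delta^+(S')\setminus A$ first (mass $\val(x^*)-m\theta^*$, coefficient $1$) and then loads $m\theta^*$ onto $A$ (coefficient $1-\Gamma/m$), giving exactly $\val(x^*)-\Gamma\theta^*=\opt_{\text{LO}}$; with Lemma~\ref{lem:LO<=NI} this would establish $\opt_{\text{RNI}}=\opt_{\text{LO}}$, once the remaining flows are handled.

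The main obstacle is precisely the family of flows that cross $S'$ backward as well as forward. For these the single-cut estimate $f(\mu,y)\le\sum_{e\in\delta^+(S')\setminus A}y_e$ is loose: circulation can inflate $\sum_{e\in A}y_e$ far beyond the net flow reaching $t$, so the knapsack computation above no longer caps the expected payoff at $\opt_{\text{LO}}$. Closing this gap is where the work concentrates; I would combine the $S'$-bound with the global bound $f(\mu,y)\le\val(x^*)$ (valid in case (iv) because $\val(x^*)$ is the nominal max-flow value and $y_e\le u_e$, so $\val(y)\le\val(x^*)$), or argue that an optimal adaptive flow may be taken to cross $S'$ only forward, and then verify that neither relaxation lets the flow player exceed $\val(x^*)-\Gamma\theta^*$. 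The cut constructions for the other four parts are routine by comparison.
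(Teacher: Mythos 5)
Your proofs of parts (i), (ii), (iii), and your dichotomy for (v), are correct and coincide with the paper's own argument essentially line by line: the same cuts $S'$ and $S''$ from Lemma~\ref{lem:LO-cut}, the same counting that forces $|A(S',\theta^*)|=\Gamma$ under the hypothesis of (i), the same observation that $\opt_\text{LO}<\theta^*$ is equivalent to $\opt_\text{LO}<\frac{1}{\Gamma+1}\val(x^*)$, and the same split on whether $|B(S'',\theta^*)|\geq 1$ or $x^*$ is a nominal maximum flow.

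The genuine gap is part (iv), and you say so yourself: your argument stops exactly at the flows that cross $S'$ backward, and neither of your proposed repairs closes it. Intersecting your cut bound with the global bound $f(\mu,y)\le\val(x^*)$ cannot suffice, because $\opt_\text{LO}=\val(x^*)-\Gamma\theta^*$ is strictly below $\val(x^*)$ whenever $\theta^*>0$, so a minimum of two estimates that each exceed $\opt_\text{LO}$ proves nothing. And the obstruction is not hypothetical: if $A(S',\theta^*)$ contains arcs with $u_e\gg\theta^*$ and $S'$ admits a high-capacity backward arc, a circulation of size $C$ across the cut inflates your estimate $\sum_{e\in\delta^+(S')\setminus A}y_e+(1-\tfrac{\Gamma}{m})\sum_{e\in A}y_e$ by roughly $(1-\tfrac{\Gamma}{m})C$, which can be made arbitrarily larger than $\opt_\text{LO}$. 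Your second suggestion, that an optimal adaptive flow may be taken to cross $S'$ only forward, is not a free reduction either: cancelling circulation lowers the fixed capacities $y_e$, and $f(\mu,\cdot)$ is monotone in these capacities, so cancellation can decrease $f(\mu,y)$ for some scenarios $\mu$; one must prove it never decreases the minimum over $\mu$, which is a real claim requiring a proof you do not supply. Since the $|B(S'',\theta^*)|=0$ branch of your (v) invokes (iv), that part inherits the gap.

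What the paper does instead is import exactly the missing machinery: it passes to the adaptive side via Theorem~\ref{thm:RNI=Adp} and invokes Lemma~8 of \cite{BertNasrStil12}, the cut characterization $\min_{\mu\in\Omega}f(\mu,x)=\min_{S\in\mathcal{S}}R(x,S)$ with $R(x,S)=\min_{\mu\in\Omega}\sum_{e\in\delta^+(S)}(1-\mu_e)x_e$. Minimizing over \emph{all} cuts is precisely what neutralizes circulation (flow crossing one cut backward is caught crossing another cut forward), after which $\opt_\text{ADP}=\max_{x}\min_{S}R(x,S)\le\max_{x}R(x,S')=R(x^*,S')=\val(x^*)-\Gamma\theta^*=\opt_\text{LO}$, and Lemma~\ref{lem:LO<=NI} yields equality. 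To your credit, you located the delicate point: the paper itself compresses it into the terse equality $\max_{x\in\mathcal{X}}R(x,S')=R(x^*,S')$, which silently uses the maximum-flow hypothesis (to control $\sum_{e\in\delta^+(S')}x_e$ via $\val(x)\le\val(x^*)$) and care about backward arcs of $S'$. But the paper has the cut-characterization lemma to lean on, while your sketch has nothing in its place; as written, part (iv) is unproven.
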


\begin{proof}

\textbf{Part~\eqref{NI=LOifLO<val(x)}:}
It follows from $\opt_\text{LO}< \frac{1}{\Gamma+1}\val(x^*)$ that $\val(x^*)<(1+\Gamma)\theta^*$. In addition, by Part~\eqref{it:LO-cut1} of Lemma~\ref{lem:LO-cut}, there exists an $s$-$t$-cut $S'$ with $\val(x^*)=\ca(S',\theta^*)$ and $|A(S',\theta^*)|\geq \Gamma$.  Therefore,
\begin{align*}
(1+\Gamma)\theta^*>\val(x^*)&=\ca(S',\theta^*)=\sum_{e\in A(S',\theta^*)}\theta+\sum_{e\in \delta^+(S')\setminus A(S',\theta^*)}u_e\\
&=\theta|A(S',\theta^*)|+\sum_{e\in \delta^+(S')\setminus A(S',\theta^*)}u_e,
\end{align*}
and consequently
\begin{align*}
|A(S',\theta^*)|=\Gamma && \text{and} && \opt_{\text{LO}}=\val(x^*)-\Gamma\theta^* =\sum_{e\in \delta^+(S')\setminus A(S',\theta)}u_e.
\end{align*} 
If the arcs in $A(S',\theta^*)$ are removed, then the maximum flow value in the remanning network is at most $\sum_{e\in \delta^+(S')\setminus A(S',\theta^*)}u_e$, which is equal to $\opt_{\text{LO}}$. This implies that $ \opt_\text{NI}\leq \opt_\text{LO}$. On the other hand, it follows from Theorem~\ref{lem:LO<=NI} that $\opt_\text{LO}\leq \opt_\text{NI}$. Hence, we must have $\opt_\text{LO}=\opt_\text{NI}$.

\textbf{Part~\eqref{RNI=Val(x)ifLO<delta}:}
The inequality $\opt_\text{LO}< \theta^*$ holds if and only if $\opt_\text{LO}< \frac{1}{\Gamma+1}\val(x^*)$. Therefore, it follows from the previous part that $\opt_\text{LO}= \opt_\text{NI}$. Moreover, by Theorem~\ref{lem:LO<=NI}, we have $\opt_\text{LO}\leq \opt_\text{RNI}\leq \opt_\text{NI}$. This implies that $\opt_\text{LO}=\opt_\text{RNI}$.

\textbf{Part~\eqref{NI<=Val(x)}:} By Part~\eqref{it:LO-cut2} of Lemma~\ref{lem:LO-cut}, there exists an $s$-$t$-cut $S''$ with $\val(x^*)=\ca(S'',\theta)$ so that $|B(S'',\theta^*)|< \Gamma$. For each $e\in \delta^+(S'')\setminus B(S'',\theta^*)$, we have $u_e\leq \theta^*$.  Therefore, if the arcs in $B(S'',\theta^*)$ are removed, the maximum amount of flow that can be sent from $s$ to $t$ is at most $\sum_{e\in \delta^+(S'')\setminus B(S'',\theta^*)}u_e$. This means that $ \opt_{\text{NI}}\leq \sum_{e\in \delta^+(S'')\setminus B(S'',\theta^*)}u_e$. Hence, we can write
\begin{align}
\label{eq:NI<=Val(x)-theta|B|}
\val(x^*)=\ca(S'',\theta)&=\sum_{e\in B(S'',\theta^*)}\theta+\sum_{e\in \delta^+(S'')\setminus B(S'',\theta^*)}u_e
\\&\geq \theta|B(S'',\theta^*)|+ \opt_{\text{NI}},
\end{align}
and consequently $\opt_{\text{NI}}\leq \val(x^*)$.

\textbf{Part~\eqref{RNI=Val(x)ifx=OPT}:}  For an $s$-$t$-cut $S$ and an $s$-$t$-flow $x$, we define 
\begin{align*}
R(x,S):=\min_{\mu\in \Omega}\sum_{e\in \delta^+(S)}(1-\mu_e)  x_e.
\end{align*}
It follows from Lemma 8 in \cite{BertNasrStil12} that 
\begin{align*}
\min_{\mu \in \Omega} f(\mu,x)=\min_{S\in \mathcal{S}} R(x,S).
\end{align*}
 Therefore, we can write
\begin{align*}
\opt_\text{ADP}= \max_{x\in \mathcal{X}}~\min_{S\in \mathcal{S}}~R(x,S)& \leq  \min_{S\in \mathcal{S}}~\max_{x\in \mathcal{X}}~R(x,S).
\end{align*}
Furthermore, we know from Part~\eqref{it:LO-cut1} of Lemma~\ref{lem:LO-cut} that there exists a cut $S'$ with $|A(S',\theta^*)|\geq \Gamma$ so that
\begin{align*}
\val(x^*)=\ca(S',\theta^*)=\sum_{e\in A(S',\theta^*)}\theta^*+\sum_{e\in \delta^+(S')\setminus A(S',\theta)}u_e.
\end{align*}
Therefore, we can write
\begin{align*}
\opt_\text{ADP}\leq \max_{x\in \mathcal{X}} R(x,S')&=R(x^*,S')=\min_{\mu\in \Omega}\sum_{e\in \delta^+(S')}(1-\mu_e)  x^{*}_e\\
&=\val(x^*)-\Gamma\theta^*= \opt_\text{LO}.
\end{align*}
Moreover, we have $\opt_\text{LO}\leq \opt_\text{ADP}$ by Lemma~\ref{lem:LO<=NI}. This proves $\opt_\text{LO}=\opt_\text{ADP}$. In addition, we know $\opt_\text{ADP}=\opt_\text{RNI}$ because of Theorem~\ref{thm:RNI=Adp}. Hence, we must have $\opt_\text{LO}=\opt_\text{RNI}$.

\textbf{Part~\eqref{RNI<=Val(x)-delta}:}  If $x^*$ is a maximum flow for the nominal problem, then it follows from previous part  that $\opt_\text{RNI}=\opt_\text{LO}=\val(x^*)-\Gamma\theta\leq  \val(x^*)-\theta^* $ and we are done.  Hence, we assume that $x^*$ is not a maximum flow. Then, there exists an $s$-$t$-cut $S''$ with $\val(x^*)=\ca(S'',\theta^*)$ so that $1\leq |B(S'',\theta^*)|$ since otherwise $\val(x^*)=\ca(S'')$ and $x^*$ must be a maximum flow. Furthermore, it follows from Inequality \eqref{eq:NI<=Val(x)-theta|B|} that $\opt_{\text{NI}}\leq \val(x^*)-\theta|B(S'',\theta^*)|$. This shows that $\opt_{\text{NI}}\leq \val(x^*)-\theta^*$ since $ |B(S'',\theta^*)|\ge 1$.
\end{proof}


\begin{lemma}
\label{lem:main}
It is always true that
\begin{align}
\label{eq:NI/LO<=T+1} \frac{\opt_\text{NI}}{\opt_\text{LO}}&\leq \Gamma+1,\\
\label{eq:RNI/LO<=T} \frac{\opt_\text{RNI}}{\opt_\text{LO}}&\leq \Gamma.
\end{align}
\end{lemma}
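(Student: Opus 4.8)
The plan is to derive both ratio bounds by a simple dichotomy on the size of $\opt_\text{LO}$ relative to a threshold, letting each branch be resolved by one of the parts of Lemma~\ref{lem:cases}, which were established precisely to serve as the two regimes of such an argument. Throughout I fix an optimal solution $(x^*,\theta^*)$ to the LO model with maximum value $\theta^*$, so that $\opt_\text{LO}=\val(x^*)-\Gamma\theta^*$, and I keep in mind the baseline chain $\opt_\text{LO}\leq\opt_\text{RNI}\leq\opt_\text{NI}$ from Lemma~\ref{lem:LO<=NI}, which guarantees the ratios are at least $1$ and supplies the ``easy'' direction everywhere.

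For Inequality~\eqref{eq:NI/LO<=T+1} I would split according to whether $\opt_\text{LO}<\frac{1}{\Gamma+1}\val(x^*)$. In that case Part~\eqref{NI=LOifLO<val(x)} of Lemma~\ref{lem:cases} gives $\opt_\text{NI}=\opt_\text{LO}$ outright, so the ratio equals $1\leq\Gamma+1$. In the complementary case one has $\val(x^*)\leq(\Gamma+1)\opt_\text{LO}$, and Part~\eqref{NI<=Val(x)} supplies $\opt_\text{NI}\leq\val(x^*)$; chaining these two gives $\opt_\text{NI}\leq(\Gamma+1)\opt_\text{LO}$, as desired.

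For Inequality~\eqref{eq:RNI/LO<=T} I would split on whether $\opt_\text{LO}<\theta^*$. If so, Part~\eqref{RNI=Val(x)ifLO<delta} yields $\opt_\text{RNI}=\opt_\text{LO}$, so the ratio is $1\leq\Gamma$. Otherwise $\theta^*\leq\opt_\text{LO}$, and here I invoke Part~\eqref{RNI<=Val(x)-delta}, namely $\opt_\text{RNI}\leq\val(x^*)-\theta^*$. Substituting $\val(x^*)=\opt_\text{LO}+\Gamma\theta^*$ turns this into $\opt_\text{RNI}\leq\opt_\text{LO}+(\Gamma-1)\theta^*$, and then the case hypothesis $\theta^*\leq\opt_\text{LO}$ bounds the right-hand side by $\Gamma\,\opt_\text{LO}$.

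There is no serious obstacle: essentially all the content has been pushed into Lemma~\ref{lem:cases}, whose parts are engineered so that the ``small $\opt_\text{LO}$'' regimes collapse each ratio to $1$ while the ``large $\opt_\text{LO}$'' regimes are controlled by the cut-based upper bounds on $\opt_\text{NI}$ and $\opt_\text{RNI}$. The only point requiring a moment's care is that the two thresholds are chosen correctly — for the $\opt_\text{NI}$ bound the relevant comparison is against $\frac{1}{\Gamma+1}\val(x^*)$, whereas for the $\opt_\text{RNI}$ bound it is against $\theta^*$ — and, in the large case of the second inequality, that one remembers to exploit the maximality of $\theta^*$ via $\theta^*\leq\opt_\text{LO}$ to close the estimate.
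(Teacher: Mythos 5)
Your proof is correct and follows essentially the same route as the paper: the same dichotomies ($\opt_\text{LO}$ versus $\frac{1}{\Gamma+1}\val(x^*)$ for the first bound, $\opt_\text{LO}$ versus $\theta^*$ for the second), resolved by the same parts of Lemma~\ref{lem:cases} and closed by the same chain $\opt_\text{RNI}\leq \val(x^*)-\theta^*=\opt_\text{LO}+(\Gamma-1)\theta^*\leq \Gamma\,\opt_\text{LO}$. If anything, your citations are more accurate than the paper's: you invoke Part~\eqref{NI=LOifLO<val(x)} directly in the small-$\opt_\text{LO}$ case (where the paper detours through the equivalent condition $\opt_\text{LO}<\theta^*$ and Part~\eqref{RNI=Val(x)ifLO<delta}), and you correctly attribute $\opt_\text{RNI}\leq\val(x^*)-\theta^*$ to Part~\eqref{RNI<=Val(x)-delta}, which the paper miscites as Part~\eqref{NI<=Val(x)}.
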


\begin{proof}
Suppose that $(x^*,\theta^*)$ is an optimal to the LO model with maximum value $\theta^*$. 
If $\opt_\text{LO}\geq \frac{1}{\Gamma+1}\val(x^*)$, then $\opt_\text{LO}\geq \frac{1}{\Gamma+1}\opt_{\text{NI}}$ because of Part~\eqref{NI<=Val(x)} of Lemma~\ref{lem:cases}, and consequently Inequality~\eqref{eq:NI/LO<=T+1} holds.  
Hence, we assume that $\opt_\text{LO}< \frac{1}{\Gamma+1}\val(x^*)$. This implies that $\opt_\text{LO}<\theta^* $. Therefore, $\opt_\text{LO}=\opt_\text{NI}$ because of Part~\eqref{RNI=Val(x)ifLO<delta} of Lemma~\ref{lem:cases}. This establishes Inequality~\eqref{eq:NI/LO<=T+1}. This also shows that Inequality~\eqref{eq:NI/RNI<=T+1} is true since $\opt_\text{LO}\leq \opt_\text{RNI}\leq \opt_\text{RNI}$ by Theorem~\ref{lem:LO<=NI}. 

We proceed to prove the validity of Inequality~\eqref{eq:RNI/LO<=T}.  If $\opt_\text{LO}<\theta^*$, then by Part~\eqref{RNI=Val(x)ifLO<delta} of Lemma~\ref{lem:cases} we must have $\opt_{\text{RNI}}=\opt_\text{LO}$, and consequently Inequality~\eqref{eq:RNI/LO<=T} holds. Thus, we assume that $\opt_\text{LO}\geq\theta^*$.  We can write
\begin{align*}
\opt_{\text{RNI}}\leq \val(x^*)-\theta^* =\opt_\text{LO}+(\Gamma-1)\theta^*\leq  \Gamma\cdot \opt_\text{LO},
\end{align*}
where the first inequality follows from Part~\eqref{NI<=Val(x)} of Lemma~\ref{lem:cases} and the second inequality follows from the fact that   
$\opt_\text{LO}\geq\theta^*$. This shows that Inequality~\eqref{eq:RNI/LO<=T} always holds. 
\end{proof}


{\emph{Proof of Theorem \ref{thm:main}}.} The validity of the bounds in \eqref{eq:NI/RNI<=T+1}, \eqref{eq:NI/RNI-path<=T+1}, and \eqref{eq:RNI/RNI-path<=T} immediately follows from Lemmas \ref{lem:LO<=NI} and \ref{lem:main}.  We next provide two examples to show these bounds are all tight.

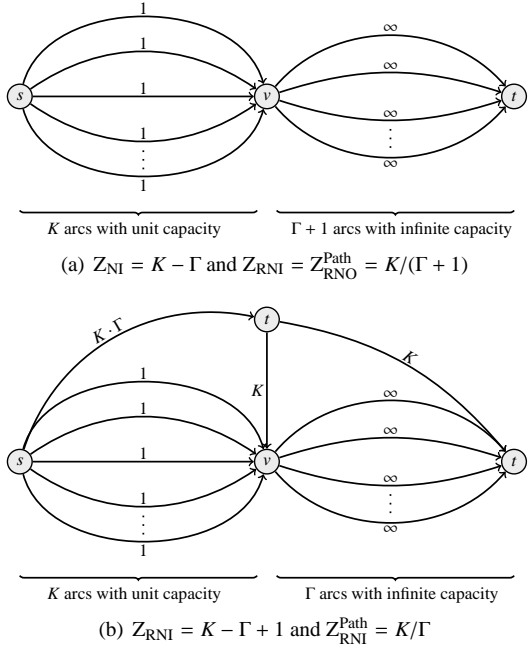
\begin{figure}[t]
    \centering
    \subfigure[\label{fig:RVal-NInt(a)} $\opt_{\text{NI}}=K-\Gamma$ and $\opt_{\text{RNI}}=\opt_{\text{RNO}}^{\text{Path}}=K/(\Gamma+1)$]
{\scalebox{0.65}{
\begin{tikzpicture}[inner sep=0.4mm]
  \node (1) at (0,0) [place] {$s$};
  \node (2) at (5,0) [place] {$v$};
  \node (3) at (10,0) [place] {$t$};
\begin{scope}[color=black,line width=1pt]
  \draw [->] (1) to [bend left=75]  node [above, sloped,midway]{1} (2);
   \draw [->] (1) to [bend left=35] node [above, sloped,midway]{1} (2);
 \draw [->] (1) -- (2) node [above, sloped,midway]{1} (2);
  \draw [->] (1) to [bend right=35] node [above, sloped,midway]{1} (2);
    \draw [->] (1) to [bend right=75] node [below, sloped,midway]{1}  (2);
  \draw[snake=brace,mirror snake,raise snake=-5pt] (0,-2.5) -- (4.8,-2.5) node [below, sloped,midway]{$K$  arcs with  unit capacity~~~};  
  \draw [->] (2) to [bend left=50]  node [above, sloped,midway]{$\infty$} (3);
  \draw [->] (2) to [bend left=18]  node [above, sloped,midway]{$\infty$} (3);
   \draw [->] (2) to [bend right=18]  node [above, sloped,midway]{$\infty$} (3);
  \draw [->] (2) to [bend right=50] node [below, sloped,midway]{$\infty$} (3);
  \draw[snake=brace,mirror snake,raise snake=-5pt] (5.2,-2.5) -- (10,-2.5) node [below, sloped,midway]{~~~$\Gamma+1$ arcs with infinite capacity};
\draw[loosely dotted] (7.5,-1.05) -- (7.5,-0.65);
\draw[loosely dotted] (2.5,-1.1) -- (2.5,-1.5);
\end{scope}
\end{tikzpicture}}}
   \subfigure[\label{fig:RVal-NInt(b)} $\opt_{\text{RNI}}=K-\Gamma+1$ and $\opt_{\text{RNI}}^{\text{Path}}=K/\Gamma$]
{\scalebox{0.65}{
\begin{tikzpicture}[inner sep=0.4mm]
  \node (1) at (0,0) [place] {$s$};
  \node (2) at (5,0) [place] {$v$};
  \node (3) at (10,0) [place] {$t$};
  \node (4) at (5,2.9) [place] {$t$};
\begin{scope}[color=black,line width=1pt]
  \draw [->] (1) to [bend left=75]  node [above, sloped,midway]{1} (2);
   \draw [->] (1) to [bend left=35] node [above, sloped,midway]{1} (2);
 \draw [->] (1) -- (2) node [above, sloped,midway]{1};
  \draw [->] (1) to [bend right=35] node [above, sloped,midway]{1} (2);
    \draw [->] (1) to [bend right=75] node [below, sloped,midway]{1} (2);
  \draw[snake=brace,mirror snake,raise snake=-5pt] (0,-2.5) -- (4.8,-2.5) node [below, sloped,midway]{$K$  arcs with  unit capacity~~~};  
  \draw [->] (2) to [bend left=50]  node [above, sloped,midway]{$\infty$} (3);
  \draw [->] (2) to [bend left=18]  node [above, sloped,midway]{$\infty$} (3);
   \draw [->] (2) to [bend right=18]  node [above, sloped,midway]{$\infty$} (3);
  \draw [->] (2) to [bend right=50] node [below, sloped,midway]{$\infty$} (3);
    \draw [->] (4) -- (2) node [left,midway]{$K$};
  \draw [->] (1) to [bend left=42]  node [above, sloped,midway]{$K\cdot \Gamma$} (4);
  \draw [->] (4) to [bend left=20] node [above, sloped,midway]{$K$} (3);
  \draw[snake=brace,mirror snake,raise snake=-5pt] (5.2,-2.5) -- (10,-2.5) node [below, sloped,midway]{~~~$\Gamma$ arcs with infinite capacity};
\draw[loosely dotted] (7.5,-1.05) -- (7.5,-0.65);
\draw[loosely dotted] (2.5,-1.1) -- (2.5,-1.5);
\end{scope}
\end{tikzpicture}}}
\caption{\label{fig:RVal-NInt} Networks for the proof of Theorem~\ref{thm:main}.  The numbers on the arcs indicate the capacities. }
\end{figure}

In the first example, we consider a network with three nodes $s$, $v$, and $t$, and parallel arcs from $s$ to $v$ and $v$ to $t$. There are $K$ parallel arcs with unit capacity from $s$ to $v$ and $\Gamma+1$ parallel arcs with infinite capacity from $v$ to $t$ (see the network in Figure \ref{fig:RVal-NInt(a)}). We let $K\geq \Gamma+1$. In this network, we have $\opt_{\text{NI}}=K-\Gamma$, whereas $\opt_{\text{RNI}}=\opt_{\text{RNO}}^{\text{Path}}=K/(\Gamma+1)$. Therefore, $\frac{\opt_{\text{NI}}}{\opt_{\text{RNI}}}=\frac{\opt_{\text{NI}}}{\opt_{\text{RNO}}^{\text{Path}}}=\frac{\Gamma+1(K -\Gamma)}{K }$.  When $K $ gets enough large, the bound becomes enough close to  $\Gamma+1$. This shows the bounds in Inequalities \eqref{eq:NI/LO<=T+1} and \eqref{eq:NI/RNI-path<=T+1} are tight.

In the second example, we consider a network with four nodes $s$, $v$, $w$, and $t$ as shown in Figure \ref{fig:RVal-NInt(b)}. There are $K$ parallel arcs with unit capacity from $s$ to $v$ and $\Gamma$ parallel arcs with infinite capacity from $v$ to $t$. In addition, there is one arc from $s$ to $v$ with capacity $\Gamma\cdot K$, one arc from $w$ to $v$ with capacity $K$, and one arc from $w$ to $t$ with infinite capacity. In this network, we have $\opt_{\text{RNI}}=K-\Gamma+1$, whereas $\opt_{\text{RNI}}^{\text{Path}}=K/\Gamma$. Therefore, $\frac{\opt_{\text{RNI}}}{\opt_{\text{RNI}}^{\text{Path}}}=\frac{\Gamma(K -\Gamma+1)}{K }$.  When $K $ gets enough large, the bound becomes enough close to $\Gamma$. This shows the bound in Inequality \eqref{eq:RNI/RNI-path<=T} is tight.
\qed

\section{Approximation bounds}
It follows from Lemma~\ref{lem:main} that the optimal value of the LO model is a $(\Gamma+1)$-approximation for $\opt_{\text{NI}}$ and a $\Gamma$-approximation for $\opt_{\text{RNI}}$. We next show that the optimal value of the LO model also provides a good approximation for $\opt_{\text{RNI}}^{\text{Path}}$. 

\begin{theorem}
We have
\begin{align}
\label{eq:RNI-path/LO<=T/4}
\frac{\opt_\text{RNI}^{\text{Path}}}{\opt_\text{LO}}&\leq 1+\frac{\lfloor \Gamma/2\rfloor \cdot \lceil \Gamma/2\rceil}{\Gamma+1},
\end{align}
and this bound is tight.
\end{theorem}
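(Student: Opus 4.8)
The inequality $\opt_{\text{LO}}\le\opt_{\text{RNI}}^{\text{Path}}$ is already given by Lemma~\ref{lem:LO<=NI}, so the ratio is at least $1$ and the whole task is the upper bound $\opt_{\text{RNI}}^{\text{Path}}\le\big(1+\tfrac{\lfloor\Gamma/2\rfloor\lceil\Gamma/2\rceil}{\Gamma+1}\big)\opt_{\text{LO}}$. By Theorem~\ref{thm:RNI=Adp-Path} I may argue equivalently on $\opt_{\text{ADP}}^{\text{Path}}=\max_{x\in\mathcal{X}_P}\min_{\mu\in\Omega}g(\mu,x)$, so it suffices to produce, for every committed path flow $x$, a removal (or a distribution of removals) that destroys enough of $x$. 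Throughout I would fix an optimal LO solution $(x^*,\theta^*)$ with maximal $\theta^*$ together with the two cuts $S'$ and $S''$ supplied by Lemma~\ref{lem:LO-cut}, whose heavy/light arc counts $|A(S',\theta^*)|\ge\Gamma$ and $|B(S'',\theta^*)|<\Gamma$ encode the bottleneck structure of the LO optimum.

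The natural first estimate is a single deterministic response: for any $s$-$t$-cut $S$ with $n:=|\delta^+(S)|>\Gamma$, charge each flow-carrying path to one of its forward crossings of $S$; the total charge is $\val(x)$, so deleting the $\Gamma$ forward arcs carrying the largest charge destroys at least a $\Gamma/n$ fraction and leaves $\min_\mu g(\mu,x)\le\val(x)\tfrac{n-\Gamma}{n}$. Optimizing the cut and using $\val(x)\le C^*$ (the nominal maximum flow) gives $\opt_{\text{ADP}}^{\text{Path}}\le C^*\tfrac{n_{\min}-\Gamma}{n_{\min}}$, where $n_{\min}$ is the least number of arcs in any cut. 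Since the capped min-cut value $\theta\mapsto\min_{S}\ca(S,\theta)$ is concave with initial slope $n_{\min}$ and limit $C^*$, one checks that $\opt_{\text{LO}}=\max_{\theta\ge0}\big[\min_S\ca(S,\theta)-\Gamma\theta\big]\le C^*\tfrac{n_{\min}-\Gamma}{n_{\min}}$ as well, with equality exactly when the capped value rises at the extreme rate. Hence this single-cut bound yields ratio $1$ whenever the network has a single bottleneck scale (e.g. the network of Figure~\ref{fig:RVal-NInt(a)}); the gap can only open when bottlenecks occur at two different capacity levels, and the deterministic response is then too weak, giving only the crude factor $\Gamma$ of Lemma~\ref{lem:main}.

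The decisive step, and the main obstacle, is to handle the two-scale case by letting the interdictor randomize across two cut levels: spend $i$ of the $\Gamma$ deletions on the coarse bottleneck (the heavy arcs $A(S',\theta^*)$) and $\Gamma-i$ on the fine one, drawing the attacked arcs uniformly inside each level. Against such an $\alpha$, the flow player's best path-committed response splits its flow into a thinly spread part—against which the uniform deletions recover only the $\Gamma\theta^*$ already accounted for in $\opt_{\text{LO}}$, exactly as in the proof of Lemma~\ref{lem:LO<=NI}—and a part that pushes strictly more than $\theta^*$ through a few heavy arcs to raise the value at the cost of exposure. Bounding that extra value by the second cut $S''$ (which caps what can be gained above $\theta^*$) should turn the surplus of the best response over $\opt_{\text{LO}}$ into a bilinear quantity in the split, essentially a normalized product $i(\Gamma-i)/(\Gamma+1)$; maximizing over integers $0\le i\le\Gamma$ gives $\lfloor\Gamma/2\rfloor\lceil\Gamma/2\rceil/(\Gamma+1)$. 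I expect the delicate points to be choosing the marginals and correlations of $\alpha$ so that both response components are controlled simultaneously, and accounting for paths that the arc model could reroute but the path model destroys.

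For tightness I would construct an explicit family, scaled by a parameter $K\to\infty$, realizing this trade-off with equality in the limit: a layered two-scale network in which $\opt_{\text{LO}}$ is driven down to a multiple of $\Gamma+1$ while a suitable path flow keeps $\Gamma+1+\lfloor\Gamma/2\rfloor\lceil\Gamma/2\rceil$ units alive against every $\Gamma$-arc removal, forcing the interdictor into the balanced split $\Gamma=\lfloor\Gamma/2\rfloor+\lceil\Gamma/2\rceil$. Letting $K\to\infty$ cancels lower-order terms and makes the ratio approach $1+\lfloor\Gamma/2\rfloor\lceil\Gamma/2\rceil/(\Gamma+1)$, i.e. $\frac{(\Gamma+2)^2}{4(\Gamma+1)}$ for even $\Gamma$ and $\frac{\Gamma+3}{4}$ for odd $\Gamma$, matching the claimed constant.
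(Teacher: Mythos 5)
Your skeleton coincides with the paper's: the paper also starts from an optimal LO pair $(x^*,\theta^*)$ with maximal $\theta^*$, takes the two cuts $S'$ and $S''$ of Lemma~\ref{lem:LO-cut}, and has the interdictor delete deterministically at the fine bottleneck while randomizing uniformly at the coarse one. But the decisive quantitative step is missing from your write-up---you say the surplus ``should'' come out as a bilinear quantity and that you ``expect'' the choice of marginals to be delicate---and the one concrete thing you do say about it points the optimization in the wrong direction. In the paper, with $a:=|A(S',\theta^*)|\geq\Gamma$, $b:=|B(S'',\theta^*)|<\Gamma$ and $L:=\sum_{e\in\delta^+(S')\setminus A(S',\theta^*)}u_e$, the split is \emph{not} a free parameter $i$: the interdictor deletes \emph{all} $b$ arcs of $B(S'',\theta^*)$, which by the cut identities \eqref{eq:val(x)=cap(A)} and \eqref{eq:val(x)=cap(B)} caps the flow passing $S''$ at $L+(a-b)\theta^*$, and spreads the remaining $\Gamma-b$ deletions uniformly over the $a$ arcs of $A(S',\theta^*)$, so that each unit of committed path flow through those arcs survives with probability $(a-\Gamma+b)/a$. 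This yields $\opt_\text{RNI}^{\text{Path}}\leq L+(a-\Gamma+b)(a-b)\theta^*/a$ against $\opt_\text{LO}=L+(a-\Gamma)\theta^*$, and the constant arises as the \emph{worst case over the structural parameters}: the ratio $(a-\Gamma+b)(a-b)/\big((a-\Gamma)a\big)$ is maximized at $a=\Gamma+1$, $b=\lfloor\Gamma/2\rfloor$, the case $a=\Gamma$ being handled separately. Your plan instead lets the interdictor choose the split $i$ and then \emph{maximizes} a surplus of the form $i(\Gamma-i)/(\Gamma+1)$ over $i$; but a minimizing interdictor facing such a surplus would simply take $i\in\{0,\Gamma\}$ and drive the ratio to $1$, contradicting the tightness you yourself assert. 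The resolution is that $b$ is dictated by the network rather than chosen by the interdictor, and until your argument makes that distinction and actually derives the survival bound, it is a plan rather than a proof.

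The tightness half is likewise only promised. The paper gives an explicit three-node network---$\Gamma$ unit-capacity arcs together with $\lfloor\Gamma/2\rfloor$ arcs of capacity $K$ from $s$ to $v$, and $\Gamma+1$ infinite-capacity arcs from $v$ to $t$---for which it computes $\opt_\text{RNI}^{\text{Path}}=(\lfloor\Gamma/2\rfloor+1)K/(\Gamma+1)$ and $\opt_\text{LO}=K/(\lceil\Gamma/2\rceil+1)$, whose ratio equals $1+\lfloor\Gamma/2\rfloor\lceil\Gamma/2\rceil/(\Gamma+1)$ exactly, realizing precisely the extremal pair $a=\Gamma+1$, $b=\lfloor\Gamma/2\rfloor$ from the analysis. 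Your ``layered two-scale family'' keeping ``$\Gamma+1+\lfloor\Gamma/2\rfloor\lceil\Gamma/2\rceil$ units alive'' is a reasonable guess at the same shape, but no network, no flow and no values are exhibited, so neither half of the theorem is established by the proposal. (Your warm-up single-cut estimate and the observation that $\opt_\text{LO}\leq\opt_\text{RNI}^{\text{Path}}$ via Lemma~\ref{lem:LO<=NI} are correct, but the former plays no role in the paper's argument.)
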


\begin{proof}
Suppose that $(x^*,\theta^*)$ is an optimal to the LO model with maximum value $\theta^*$. By Parts~\eqref{it:LO-cut1} and Parts~\eqref{it:LO-cut2} of Lemma~\ref{lem:LO-cut}, there are $s$-$t$-cuts $S'$ and $S''$ so that $|A(S',\theta^*)|\geq \Gamma$, $|B(S'',\theta^*)|< \Gamma$, and
\begin{align}
\label{eq:val(x)=cap(A)}\val(x^*)&=\sum_{e\in \delta^+(S')}u_e(\theta^*)=\sum_{e\in A(S',\theta^*)}\theta^*+\sum_{e\in \delta^+(S')\setminus A(S',\theta^*)}u_e,\\
\label{eq:val(x)=cap(B)}
\val(x^*)&=\sum_{e\in \delta^+(S'')}u_e(\theta^*)=\sum_{e\in B(S'',\theta^*)}\theta^*+\sum_{e\in \delta^+(S'')\setminus B(S'',\theta^*)}u_e.
\end{align}
Let $a:=|A(S',\theta^*)|$, $b:=|B(S'',\theta^*)|$, and $L:=\sum_{e\in \delta^+(S')\setminus A(S',\theta^*)}u_e$. Then, it follows from \eqref{eq:val(x)=cap(A)} and \eqref{eq:val(x)=cap(B)} that 
\begin{align*}
\sum_{e\in \delta^+(S'')\setminus B(S'',\theta^*)}u_e&=\val(x^*)-b\theta^*=L+(a-b)\theta^*.
\end{align*}

Now suppose that the interdictor is restricted to delete the $b$ arcs in $B(S'',\theta^*)$ and the remanning $\Gamma-b$ arcs in $A(S',\theta^*)$. After delineating the $b$ arcs in $B(S'',\theta^*)$, $L+(a-b)\theta^*$ units of flow can be pushed through the cut $S''$, and thus, can reach the cut $S'$. Then, in the best case, the flow player can send $L$ units of flow through the arcs in $\delta^+(S')\setminus A(S',\theta^*)$ and the remanning  $(a-b)\theta^*$ units of flow through the  arcs in $A(S',\theta^*)$. Since the interdictor will remove $\Gamma-b$ arcs in $A(S',\theta^*)$ due to our assumption, the flow player can send at most $\frac{(a-\Gamma+b)(a-b)\theta^*}{a}$ units of flow through the arcs arcs in $A(S',\theta^*)$ in a path-based formulation. In total, the flow player can send at $L+\frac{(a-\Gamma+b)(a-b)\theta^*}{a}$ from the source $s$ to the sink $t$ if the interdictor removes the $b$ arcs in $B(S'',\theta^*)$ and the remanning $\Gamma-b$ arcs in $A(S',\theta^*)$. This implies that
\begin{align*}
\opt_\text{RNI}^{\text{Path}}\leq L+\frac{(a-\Gamma+b)(a-b)\theta^*}{a}.
\end{align*}

On the other hand, we have 
\begin{align*}
\opt_\text{LO}=\val(x^*)-\Gamma \theta^*=L+(a-\Gamma)\theta^*.
\end{align*}
 Therefore, 
\begin{align}
\label{eq:L+(a-T)}
\frac{\opt_\text{RNI}^{\text{Path}}}{\opt_\text{LO}}\leq \frac{L+\frac{(a-\Gamma+b)(a-b)\theta^*}{a}}{L+(a-\Gamma)\theta^*}\leq \frac{(a-\Gamma+b)(a-b)}{(a-\Gamma)a}.
\end{align}

Notice that if $a=\Gamma$, then $\opt_\text{LO}=L$ and $\opt_\text{RNI}^{\text{Path}}\leq L$, and consequently Inequality~\eqref{eq:RNI-path/LO<=T/4} holds. Hence, we assume that $a\geq \Gamma+1$. In this case, the right hand side of Inequality \eqref{eq:L+(a-T)} attaints its maximum when $a=\Gamma+1$ and $b=\lfloor \Gamma/2\rfloor$.  By substituting $a=\Gamma+1$ and $b=\lfloor \Gamma/2\rfloor$ in the the right hand side of Inequality \eqref{eq:L+(a-T)} , we obtain
\begin{align*}
\frac{\opt_\text{RNI}^{\text{Path}}}{\opt_\text{LO}}\leq 1+\frac{\lfloor \Gamma/2\rfloor \cdot \lceil \Gamma/2\rceil}{\Gamma+1}.
\end{align*}
This establishes the validity of Inequality~\eqref{eq:RNI-path/LO<=T/4}. We next show that this bound is tight.

We consider a network with three nodes $s$, $v$, and $t$. There are $\Gamma$ parallel arcs with unit capacity from $s$ to $v$ and $\Gamma+1$ parallel arcs with infinite capacity from $v$ to $t$. In addition, there are $\lfloor \Gamma/2\rfloor$ parallel arcs from $s$ to $v$ with capacity $K$. In this network, we have $\opt_{\text{RNO}}^{\text{Path}}=\frac{(\lfloor \Gamma/2\rfloor+1)K}{\Gamma+1}$, whereas $\opt_{\text{LO}}=\frac{K}{\lceil \Gamma/2\rceil+1}$. Therefore, 
\begin{align*}
\frac{\opt_{\text{RNO}}^{\text{Path}}}{\opt_{\text{LO}}}=\frac{(\lfloor \Gamma/2\rfloor+1)\cdot \lceil \Gamma/2\rceil+1}{\Gamma+1}=1+\frac{\lfloor \Gamma/2\rfloor \cdot \lceil \Gamma/2\rceil}{\Gamma+1}.
\end{align*}  
This shows that the bound in Inequality \eqref{eq:RNI-path/LO<=T/4} is tight.
\end{proof}

\bibliographystyle{ormsv080}
\bibliography{mybib}

\end{document}